\algrenewcommand\algorithmicrequire{\textbf{Precondition:}}
\algrenewcommand\algorithmicensure{\textbf{Postcondition:}}
\newtheorem{remark}{Remark}
\newtheorem{example}{Example}
\newtheorem{corollary}{Corollary}
\newtheorem{definition}{Definition}
\newtheorem{theorem}{Theorem}
\newtheorem{lemma}[theorem]{Lemma}
\newtheorem{assumption}{Assumption}
\newtheorem{proposition}{Proposition}
\newcommand\norm[1]{\lVert#1\rVert}
\def\BibTeX{{\rm B\kern-.05em{\sc i\kern-.025em b}\kern-.08em
    T\kern-.1667em\lower.7ex\hbox{E}\kern-.125emX}}
\begin{document}
\addtolength{\abovedisplayskip}{-.05cm}
\addtolength{\belowdisplayskip}{-.05cm}
\addtolength{\textfloatsep}{-.5cm}

\title{\LARGE \bf ROBUST SYNCHRONIZATION AND POLICY ADAPTATION FOR NETWORKED HETEROGENEOUS AGENTS}

\author{Miguel F. Arevalo-Castiblanco$^*$, Eduardo Mojica-Nava and, C\'esar A. Uribe
\thanks{ 
$^*$MFAC and emn are With Universidad Nacional de Colombia, Bogota, Colombia, 111321. (email: \{miarevaloc,  eamojican\}@rice.edu). MFAC and CAU are with Rice University, Houston, TX, 77005, USA. (e-mail: \{mfarevalo,  cauribe\}@rice.edu). MFAC is with the Departamento de Ingeniería Eléctrica y Electrónica, Universidad Nacional de Colombia, Bogotá, Colombia 111321, $\{$\texttt{miarevaloc\}@unal.edu.co.}. This research was funded by Minciencias grant number CT 542-2020, “Programa de Investigación en Tecnologías Emergentes para Microrredes Eléctricas Inteligentes con Alta Penetración de Energías Renovables.” The work of MFAC is supported by Fullbright-Estudiante Doctoral Colombiano 2022 and the IEEE Control Systems Society Graduate Collaboration Fellowship. The work of CAU is supported by the National Science Foundation under Grants \#2211815 and No. \#2213568, and the Google Research Award. }}
\maketitle

\begin{abstract}
We propose a robust adaptive online synchronization method for leader-follower networks of nonlinear heterogeneous agents with system uncertainties and input magnitude saturation. Synchronization is achieved using a Distributed input Magnitude Saturation Adaptive Control with Reinforcement Learning (DMSAC-RL), which improves the empirical performance of policies trained on off-the-shelf models using Reinforcement Learning (RL) strategies. The leader observes the performance of a reference model, and followers observe the states and actions of the agents they are connected to, but not the reference model. The leader and followers may differ from the reference model in which the RL control policy was trained. DMSAC-RL uses an internal loop that adjusts the learned policy for the agents in the form of augmented input to solve the distributed control problem, including input-matched uncertainty parameters. We show that the synchronization error of the heterogeneous network is Uniformly Ultimately Bounded (UUB). Numerical analysis of a network of Multiple Input Multiple Output (MIMO) systems supports our theoretical findings. \label{sec:abstract}
\end{abstract}


\section{Introduction} \label{sec1}
The increasing theoretical insights and the efficient implementation of reinforcement learning (RL) methodologies have positioned this framework as a viable option for developing robust and efficient data-driven controllers~\cite{Recht:19, arevalo2024application}. However, gaining a comprehensive theoretical understanding of reinforcement learning remains challenging due to the numerous factors that must be considered when applying it to autonomous agents in real-world scenarios. Among the most significant challenges is the substantial variability in application parameters, which necessitates multiple trials even for the same problem~\cite{Recht:19}. A prime example is addressing the discrepancy between the behaviors exhibited by agents in simulation and those observed in real-world applications, a phenomenon known as the \textit{reality gap}~\cite{Tan:18}.

The \textit{reality gap} arises from the significant difference in cost—both in terms of time and energy—between testing directly on the actual application model and performing simulations~\cite{Koos:13}. Given the iterative nature of learning methods, they are often developed in simulated environments rather than real-world settings. This approach allows for faster simulations at a substantially lower cost. However, \textit{simulations are not reality}. Discrepancies may arise due to errors in system characterization, unmodeled dynamics, or inherent inaccuracies in the model. Consequently, systems that exhibit high performance in simulation may become entirely impractical in real-world applications or may require costly fine-tuning to achieve similar performance in practice~\cite{Nguyen2020}.

The effects of the \textit{reality gap} can be particularly pronounced in Multi-Agent Systems (MAS). In MAS, the interaction among agents forms the cornerstone of cooperative control, which is increasingly recognized as a crucial approach for addressing both current and future critical applications, such as autonomous multi-vehicle systems, resource allocation in networks, synchronization in power systems, and more~\cite{lewis2013cooperative,wang2017cooperative}. The challenge becomes more significant as multiple interacting agents collectively contribute to potential deviations from simulated behaviors~\cite{bucsoniu2010multi,zhang2021multiagent}.

Consensus-based control strategies have become central in cooperative control within MAS, with a wealth of literature supporting this field, beginning with seminal works such as~\cite{Tsitsiklis:89,olfati:07}, and extending to more recent comprehensive reviews~\cite{Cao2012,cortes:19}. Historically, most successful applications of cooperative control have relied on model-based approaches. However, over the past decade, significant efforts have been directed toward addressing the uncertainties inherent in real-world application models. Notably, machine learning-driven approaches have gained prominence in tackling a wide range of challenges in the control of MAS~\cite{Nguyen2020}. These theoretical advancements have found application across diverse domains, from industrial systems, as exemplified by Han et al., to more complex and robust concepts such as the Internet of Battle Things~\cite{Han2021}, where agents communicate and collaborate even in military and adversarial environments~\cite{kott2016internet}.


In Guha et al.\cite{anaswamy:21,guha2021online}, the authors propose a framework that enhances RL-trained policies using adaptive control to address modeling errors and system perturbations. This approach introduces an adaptive control mechanism within the inner loop. At the same time, pre-trained (off-the-shelf) RL policies, specifically those based on the proximal policy optimization algorithm, are applied in the outer loop\cite{schulman2017proximal}. RL techniques can be effectively employed in control problems involving reference models, functioning as reference-based adaptive controllers. The primary role of these controllers is the online adjustment of parameters through adaptive laws to synchronize the system's dynamics with a reference model. In the context of MAS, the concept of Distributed Model Reference Adaptive Control has been integrated with RL techniques (DMRAC-RL) to mitigate the \textit{reality gap} in systems with heterogeneous agents~\cite{Miguel:20}. However, the application of DMRAC-RL methodologies is constrained by their limited consideration of certain inherent aspects of the agents, such as non-linear dynamics, uncertainties, and the specific characteristics of their actuators.

There is a growing interest in integrating RL techniques and adaptive control as a robust framework to deal with these complex environments~\cite{gaudio2019c}. \textit{This paper proposes a framework for robust adaptive synchronization of networked nonlinear heterogeneous agents that use a pre-trained RL policy and an adaptive controller to mitigate model and parameter uncertainties.} Leader-follower synchronization is achieved using a Distributed Input Magnitude Saturation Adaptive Control (DMSAC) that improves the performance of a policy defined by an RL-trained algorithm. Given the difference between a real system and a reference model, this policy is initially adjusted and integrates a distributed reference-based framework for online policy synchronization. The proposed DMSAC-RL uses an internal loop that directly adjusts the policy for agents and complements an external loop in an augmented input to solve the distributed control problem. The control actions resulting from this process include an input saturation component for its correct application in actuators. Moreover, we use optimal modifications \cite{Nguyen2018L} for disturbance suppression of the input-matched uncertainties.

The synchronization of leader agents to the reference model without uncertainties has been previously studied~\cite{anaswamy:21,arevalo2021model}. In contrast to previous control approaches that primarily focus on incorporating distributed control laws for linear systems based on adaptive laws~\cite{Baldi2019}, our framework accounts for nonlinearities and robust parameter handling in the presence of input-matched uncertainties. Similarly, while the work of Guha et al. integrates learning strategies with adaptive laws to enhance the response in nonlinear systems~\cite{anaswamy:21,guha2021online}, it does not address the complexities associated with distributed systems or the uncertainties inherent in MIMO (Multiple-Input Multiple-Output) systems. 

The main contributions of this paper can be summarized as follows:
\begin{itemize}
    \item We define an adaptive synchronization strategy based on a reference model with reinforcement learning for multi-agent control in linear and nonlinear systems.
    \item We propose a robust adaptive distributed law for synchronizing heterogeneous MIMO agents with uncertainties and input magnitude saturation.
    \item We show that the proposed method is uniformly ultimately bounded (UUB) using Lyapunov theory. 
    \item We present numerical evidence of the effectiveness of the proposed approach with simulation results for synchronizing a network of MIMO systems for tracking, unlike works such as those presented in Tao. G without the use of inverse matrices of the adaptive laws for stability analyses~\cite{tao2014multivariable}.
\end{itemize}

The rest of the paper is organized as follows. Section \ref{problem} introduces the optimal leader-follower synchronization problem with a reference model. Section \ref{DMRACRL} shows the proposed MIMO robust distributed MRAC-RL and its stability analysis. Input magnitude saturation analysis is presented in Section~\ref{MSAC}. Section \ref{sims} presents some simulation results to illustrate the performance of the proposed framework. Finally, in Section \ref{conclusion}, some conclusions are drawn.   

\textbf{Notation.} The set of integer numbers is denoted by $\mathbb{Z}$, and the set of real numbers is denoted as $\mathbb{R}$. A matrix and vector are denoted $X$ and $x$, respectively. To denote the reference model, we use $x_m$. We define $x^\top$ and $X^\top$  for the transpose of a vector or a matrix. When the Euclidean norm is needed, we write $\norm{X}^2=\sum_{i=1}^n|{x_i}|^2$. A positive definite matrix is denoted as $X\succ 0$. The trace of a matrix is $\text{tr}(X)$, where $X$ is a square matrix. The eigenvalues of a matrix are denoted with $\lambda$. The estimated values of a parameter $x$ are denoted by $\Tilde{x}$ and its ideal value as $x^*$. The state $y$ for an agent $i$ is denoted as $x_{i,y}$.

\section{Problem Formulation} \label{problem}

We consider a network of $N$ agents, where the dynamics of each agent \mbox{$i \in [1,\cdots, N]$} are modeled as the following dynamical system:
\begin{equation}
    \Dot{x}_i=A_i\sigma_i(x_i)+ B_i\Lambda (u_i+w_i(x_i)), \hspace{0.5cm} i \in \left[1,...,N\right],
    \label{nonlinear-system-imu}
\end{equation}
with $x_i \in \mathbb{R}^n$ is the state of the agent, $\sigma_i:\mathbb{R}^n\xrightarrow{}\mathbb{R}^n$ is a canonical nonlinear map of the states as
\begin{equation}  \sigma_i(x_i)=\left[\psi(x_{i,1}),x_{i,2},x_{i,3},\ldots,x_{i,n}\right]^\top,
    \label{nl_sys}
\end{equation}
with $\psi(x_{i,1})$ acting as a nonlinear \textit{known} function, $u_i$ $\in$ $\mathbb{R}^p$ is the control input, $A_i$ is an \textit{unknown} matrix associated to the agent states, $B_i$ is a \textit{known} input matrix, $w_i\colon\,\mathbb{R}^n\to\mathbb{R}^p$ is a bounded input uncertainty, and $\Lambda$ is an \textit{unknown} efectiveness matrix. 

Agents interact over a network $\mathcal{G}=(V,E)$, where $V= [1,\cdots,N]$ is the set of nodes or agents, and $E$ is the set of edges, such that $(j,i)\in E$ if agent $j$ is an in-neighbor of agent $i$. The adjacency matrix of the graph $\mathcal{G}$ is defined as $\mathcal{A}=[a_{ij}]$ where $a_{ii}=0$ and $a_{ij}=1$ if and only if $(j,i) \in E$, where $i \neq j$. The properties of the graph are specified in the following assumption. \\

\begin{assumption}\label{assum:graphs}
The graph $\mathcal{G}$ is unweighted, directed, and acyclic.
\end{assumption}

The system's heterogeneity is modeled by allowing $A_i{\not=}A_j$ and $B_i{\not=}B_j$. However, we assume that the system dynamics of the agents are sufficiently close, as described in the following assumption.\\

\begin{assumption}[From Proposition 1 in Baldi et al.~\cite{Baldi2019}]\label{assum:coupling-mc}
For every pair of connected agents $i,j\in\left[1,...,N\right]$ with $i\neq j$, there exist matrices $K_{ij}^*\in\mathbb{R}^{n\times p}$ and $K_{rij}^*\in\mathbb{R}^p$, defined as coupling matching conditions, such that
\begin{equation}
    A_j=A_i+{B_i}\Lambda K_{ij}^{*}\; \text{, and }
    B_j={B_i}\Lambda K^{*}_{rij}.
    \label{MC}
\end{equation}
\end{assumption}

Assumption~\ref{assum:coupling-mc} implies that any agent $j$ can match the model of an agent $i$ through appropriate gains. These conditions have been previously used for tracking multi-agent systems in mechanical networks~\cite{arevalo2021model}. Similarly, when we consider the perturbation parameters, we could define the following matching condition. \\

\begin{assumption}\label{assum:uncertainties}
For every pair of connected agents $i,j\in\left[1,...,N\right]$ with $i\neq j$, there exist a matrices $\Theta^*_j\in\mathbb{R}^{n\times p}$, defined as uncertainty matching condition, such that
\begin{equation}
    B_j\Lambda={B_i}\Lambda \Theta^{*}_{j}.
    \label{u_MC}
\end{equation}
\end{assumption}

Moreover, we assume that there is a known reference model that can be understood as an ideal system that describes the unknown dynamics of the agents and for which we have an oracle that can provide off-the-shelf controllers. 
The reference model has the following form
\begin{equation}
    \dot{x}_m=A_m\sigma_m(x_m)+B_mu_m,
    \label{nl_ref}
\end{equation}
where $\sigma_m$ $\in$ $\mathbb{R}^n$ is the reference nonlinear map of $x_m$,  $A_m$ and $B_m$ are its states and input matrices, respectively, and $u_m$ is the control action. The matrix $A_m$ is assumed Hurwitz to have a bounded state trajectory $x_m$ for the reference input signal $u_m$. For notational simplicity, we use $\sigma_i$ or $\sigma_m$ to refer to $\sigma_i(x_i)$ or $\sigma_m(x_m)$ respectively,

Similarly to Assumption~\ref{assum:coupling-mc}, we will assume that while the set of heterogeneous agents has different dynamics from the reference model, such difference is bounded and can be described by a set of matching conditions defined in the next assumption. \\
\begin{assumption}\label{assum:feedback-mc}
For all $i \in \left[1,...,N\right]$ there exists matrices $K_{mi}^*\in\mathbb{R}^{n\times p}$ and $K_{ri}^*\in\mathbb{R}^p$, defined as feedback matching conditions, such that
\begin{equation}
    A_i+ B_i\Lambda K^{*}_{mi}=A_m \; \text{, and } B_i\Lambda K^{*}_{ri}=B_m.
    \label{fmc}
\end{equation}
\end{assumption}

Assumption~\ref{assum:feedback-mc} is required for the existence of a closed-loop system for agents that have access to the reference model.  These conditions have been previously used for adaptive control in aircraft models~\cite{guo2011multivariable}. \\

Additionally, we assume there exists a cost functional $c:X\times U\times \mathbb{N}\xRightarrow \mathbb{R}$ for the definition of the optimal control problem with respect to the reference model as:
\begin{align} \label{opt_prob}
    \min_{u\in\mathcal{U}, \forall t\in\left[0,T\right]}&\int_0^Tc\left(x_m,u_m,t\right)dt, \\
    \text{s.t }\quad & \dot{x}_m={A_m}\sigma_m(x_m)+{B_m}u_m, \hspace{0.5cm} \forall t \in [0,T], \nonumber\\
    & \sigma_m(0)=\sigma_{m0}. \nonumber
\end{align}
In this case, a reinforcement learning strategy is used to generate a control policy $\pi$ such that $u_m(t) =\pi\left(x\right)$ produces the solution of~\eqref{opt_prob}. Note that most RL approaches will formulate Problem~\eqref{opt_prob} with the dynamic of the reference model as a Markov Decision Process (MDP)~\cite{sutton1998introduction}. \\

\begin{remark}
Our goal is not to study the efficiency of RL controllers or to compare RL training methods. Instead, we seek to use a policy trained on a reference model on a system with heterogeneous parameters.
\end{remark}

We define agents as a \text{leader} or leaders as the set of agents with access to the policy $\pi(x_m)$, and the state and control action of the reference model, i.e., $(u_m,x_m)$. Without loss of generality, we assume only one leader exists and denote it as Agent $1$. \\

\begin{assumption}\label{asuum:one_agent}
The graph $\mathcal{G}$  has a spanning tree, where agent $1$ is the root node.
\end{assumption}

A follower agent is defined as not having access to the policy $\pi(\cdot)$, nor the states or actions of the reference model. Follower agents can only observe the states and control actions of their in-neighbors on the network. Figure~\ref{fig:1} shows a network with a leader, a reference model, and four follower agents. Each agent has a controller that takes information from the graph communication, and the control action is regulated by a saturator.

\begin{figure}[t]
\centering
\includegraphics[width = 0.5\textwidth]{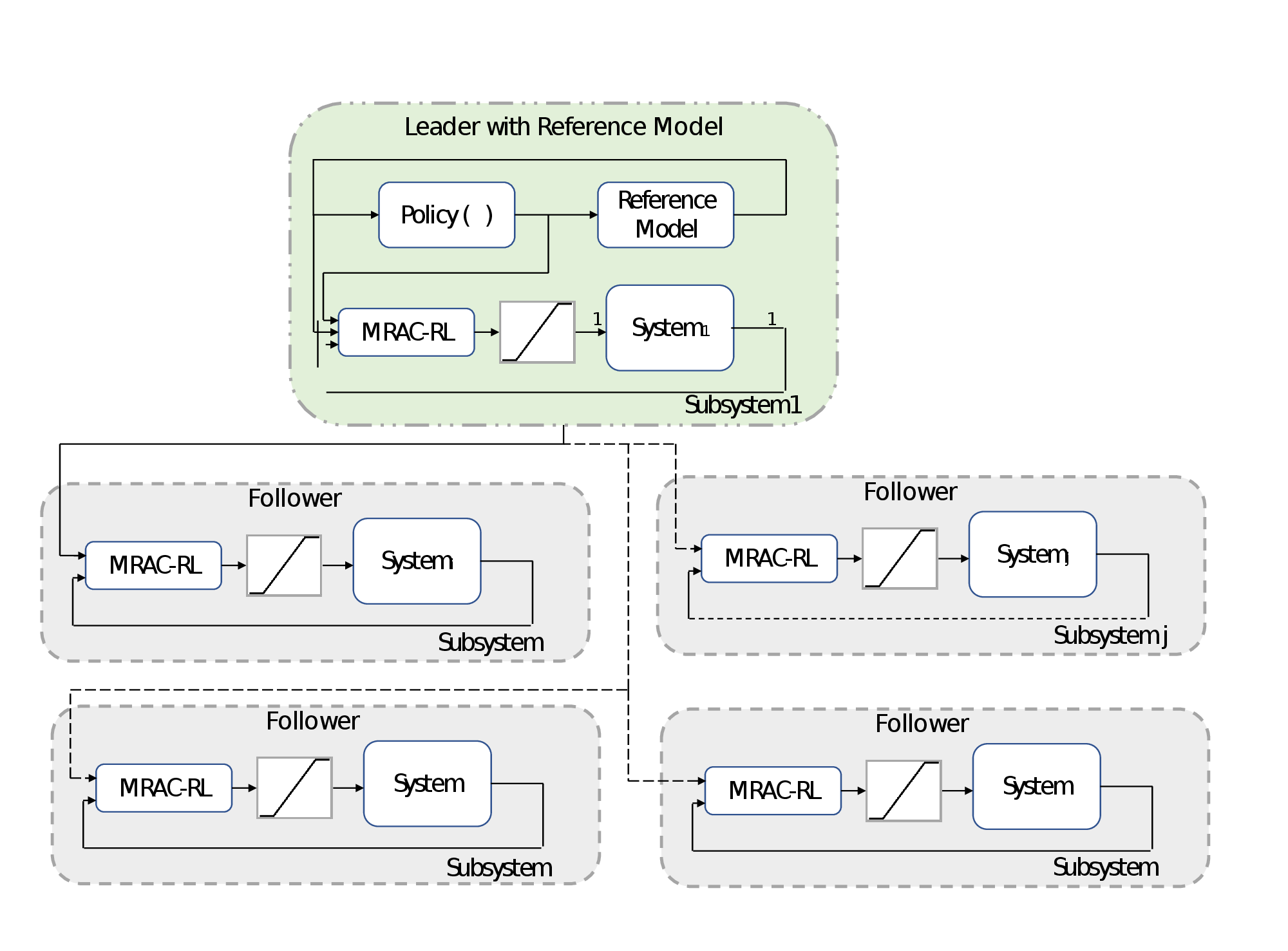}
\caption{Block diagram DMRAC-RL with one leader and four followers. The model trained with the learning strategy and each system, together with its controller with saturation, are represented.}	
\label{fig:1}
\end{figure}

Note that the policy $\pi$ is obtained for the reference model. Therefore, its performance cannot be guaranteed when executed over the leader or follower agents due to the heterogeneity of their models. Moreover, the follower agents are oblivious to the learned RL policy. Thus, our task is to develop local controllers and guarantee that all gents in the network to synchronize their states with the reference trajectory. Formally, we seek to guarantee uniformly ultimately bounded  (UUB) synchronization errors between all agents, as defined below. \\

\begin{definition}\textit{(Uniformly Ultimately Boundedness)}\label{uub}
The solution of a non-autonomous system is said to be uniformly ultimately bounded if, for any $R > 0$, there exists some $r > 0$ independent of $R$ and of the initial time $t_0$ such that 
\begin{equation}
    \norm{x_0}<r\xRightarrow[]{}\norm{x}\leq R, \forall t\geq t_0+T,
\end{equation}
with $T = T(r)$ as a time interval after the initial time $t_0$.
\end{definition}

\begin{example}
We show how discrepancies between the reference model in which the RL policy was trained and the actual model being controlled affect the control system's performance. Figure~\ref{fig:system1a} shows the performance of a control system on an inverted pendulum where the policy was trained with a specified reference model. Additionally, we show the response when the system's parameters differ from the reference model in a certain absolute percentage. The pre-trained policy stabilizes the pendulum around the equilibrium point for the reference model. However, when the linear system parameters differ from those used in the training phase, the system might not converge to equilibrium. In this case, the pre-trained policy does not stabilize the system with a variation above $10 \%$. For a detailed exposition of this phenomenon, see Guha et al.~\cite{anaswamy:21, guha2021online}. 

\begin{figure}[ht]
    \centering
    \includegraphics[width=0.5\textwidth]{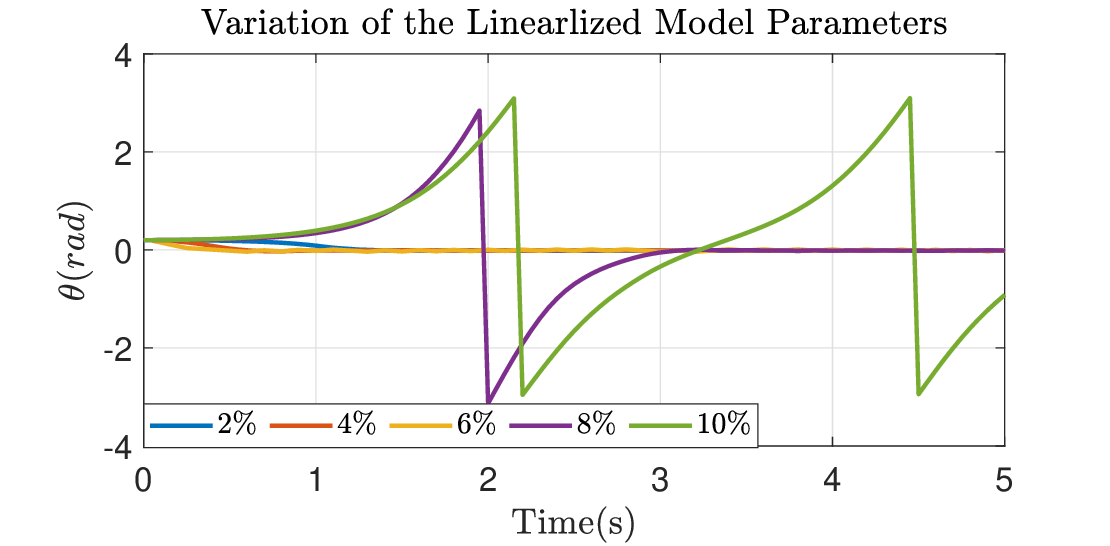}
    \caption{Response of a reinforcement learning algorithm to systems with variation from the parameters used for training in a Nonlinear pendulum model. Each of the lines represents the percentage variation in the system parameters.}
   \label{fig:system1a}
\end{figure}

\end{example}

The following section describes the analysis of the synchronization problem for leader agents based on the described problem formulation.

\section{DMRAC-RL for MIMO Leader Agents}\label{sec3}
With the problem formulation of Section~\ref{problem}, we define the control law for the leading agents considering for the leader the uncertainties $w(x_1)\neq 0$. 

The control law defined for the synchronization of the leader agent is defined as
\begin{equation}
    u_1=K_{m1}\sigma_1+K_{r1}\xi-\Theta_1\phi_1(x_1),
    \label{u1_nl_imu2}
\end{equation} 
where the adaptive gain $K_{m1}$ is the constant associated with the reference states, and $K_{r1}$ is associated with the augmented reference signal, defined as
\begin{equation}
    \xi_1:=u_m-b^mZ^{m\top}_r\left(\sigma_1-\sigma_m\right)+b^m\Upsilon_r^{m\top}e_1,
    \label{xi-def}
\end{equation}
where $Z^m\in\mathbb{R}^{n\times p}$ is a positive definite matrix with the last row of the components of $B_{m}$,
$b^{m}$ is the average of the last row of matrix $B_m$.
$\Upsilon^m\in\mathbb{R}^{n\times p}$ is a matrix corresponding to the last row of matrix $A_m$. The adaptive law $\Theta_1\in\mathbb{R}^{l\times p}$ is used for the suppression of input uncertainty parameters, and $\phi_1\colon \mathbb{R}^n \to \mathbb{R}^p$ is a known bounded basis function. We assume that there exist a $\Theta_1^*$ such that $w_1(x_1)=\Theta_1^{*\top}\phi_1$. Moreover, for an arbitrary $\Theta_1$, we define an approximation error as
\begin{equation}
    \epsilon_1(x_1)=\Theta^{\top}_1\phi(x_1)-w_1(x_1).
\end{equation}

We propose the following dynamical laws for the adaptive parameters
\begin{subequations} \label{al_nl2}
    \begin{align}
        {\dot{K}_{m1}}&=-{\Gamma_m}\sigma_1e^\top_1 P_1{B_1},\\
        {\dot{K}_{r1}}&=-{\Gamma_r}\xi e^\top_1 P_1 {B_1}, \\
        \dot{\Theta}_1&=-{\Gamma_\theta}\phi_1(x_1) e^\top_1 P_1 {B_1}.
    \end{align}    
\end{subequations}
where $\Gamma_m=\Gamma^\top_m\succ 0$, $\Gamma_r=\Gamma^\top\succ 0$, $\Gamma_\theta=\Gamma^\top_\theta\succ 0$ are adaptive gains, and $P_1=P^\top_1\succ 0$ that is the solution of the following linear Lyapunov function
\begin{equation}
    P_1A_{H}+{A^\top_{H}}P_1=-Q, \hspace{0.5cm}Q\succ 0,
    \label{eq6}
\end{equation}
where $A_{H}=M+H\Upsilon_r^{m\top}$, with $B_mb^{m}=H$, and
\begin{equation}
M:=\left[\begin{tabular}{r| l}
$\mathbf{0}_{(n-1)\times 1}$ & $\mathbf{I}_{(n-1)\times (n-1)}$ \\
\hline
\multicolumn{2}{c}{$\mathbf{0}_{1\times n}$} 
\end{tabular}\right].
\label{eq_m}
\end{equation}

Next, we show that dynamic gains in~\eqref{al_nl2} guarantee UUB synchronization error between the leader agent and the reference model. Note that Proposition~\ref{prop:central} extends existing results from SISO to MIMO systems~\cite{arevalo2021model}. 

\begin{proposition}\label{prop:central}
Let Assumptions \ref{assum:feedback-mc} and \ref{asuum:one_agent} hold, and consider the leader agent $1$ with dynamics as in~\eqref{nonlinear-system-imu}, a reference model with dynamics~\eqref{nl_ref}, and the MRAC-RL control law~\eqref{u1_nl_imu2} with adaptive gain laws~\eqref{al_nl2}. Then, the synchronization error between the leader agent and the reference model, i.e., $e_1=x_1-x_m$, is UUB for all initial conditions.
\end{proposition}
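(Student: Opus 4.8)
The plan is to construct a Lyapunov function on the joint space of the synchronization error $e_1$ and the parameter errors $\tilde{K}_{m1}=K_{m1}-K_{m1}^*$, $\tilde{K}_{r1}=K_{r1}-K_{r1}^*$, $\tilde{\Theta}_1=\Theta_1-\Theta_1^*$, show its derivative is negative outside a compact set, and invoke the standard UUB argument (Definition~\ref{uub}). First I would derive the error dynamics: substitute the control law~\eqref{u1_nl_imu2} into the agent dynamics~\eqref{nonlinear-system-imu}, use the feedback matching conditions of Assumption~\ref{assum:feedback-mc} to replace $A_1+B_1\Lambda K_{m1}^*$ by $A_m$ and $B_1\Lambda K_{r1}^*$ by $B_m$, and use the definition of $\xi_1$ in~\eqref{xi-def} together with $B_mb^m=H$, $M$ as in~\eqref{eq_m} to reconstruct the map $\sigma_m$ term and the $A_H=M+H\Upsilon_r^{m\top}$ closed-loop matrix. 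The expected outcome is $\dot{e}_1 = A_H e_1 + B_1\Lambda\bigl(\tilde{K}_{m1}^\top\sigma_1 + \tilde{K}_{r1}^\top\xi_1 - \tilde{\Theta}_1^\top\phi_1\bigr) + (\text{residual terms from }\epsilon_1,w_1)$, where the residual is bounded because $w_1$ is a bounded uncertainty, $\phi_1$ is a known bounded basis function, and $x_m$ (hence $\sigma_m$) is bounded since $A_m$ is Hurwitz.

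Next I would take $V = e_1^\top P_1 e_1 + \operatorname{tr}(\tilde{K}_{m1}^\top\Gamma_m^{-1}\tilde{K}_{m1}) + \operatorname{tr}(\tilde{K}_{r1}^\top\Gamma_r^{-1}\tilde{K}_{r1}) + \operatorname{tr}(\tilde{\Theta}_1^\top\Gamma_\theta^{-1}\tilde{\Theta}_1)$, with $P_1$ the solution of the Lyapunov equation~\eqref{eq6}. Differentiating, the $e_1^\top(P_1A_H+A_H^\top P_1)e_1 = -e_1^\top Q e_1$ term is strictly negative, and the cross terms $2e_1^\top P_1 B_1\Lambda(\cdots)$ coming from the parameter errors cancel exactly against the trace derivatives once the adaptive laws~\eqref{al_nl2} are plugged in (this is where the particular form $\dot{K}_{m1}=-\Gamma_m\sigma_1 e_1^\top P_1 B_1$, etc., is designed to produce the cancellation, noting $\dot{\tilde K}=\dot K$ since the ideal gains are constant). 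What remains is $\dot V \le -\lambda_{\min}(Q)\norm{e_1}^2 + 2\norm{e_1}\,\norm{P_1}\,\norm{B_1\Lambda}\,\delta$, where $\delta$ bounds the residual disturbance term (combining the bound on $\epsilon_1$/$w_1$ and boundedness of $\sigma_m$, $u_m$). Completing the square gives $\dot V < 0$ whenever $\norm{e_1} > 2\norm{P_1}\norm{B_1\Lambda}\delta/\lambda_{\min}(Q)$, which yields UUB of $e_1$ by the usual comparison/level-set argument.

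The main obstacle I anticipate is the bookkeeping in the first step: showing that $\xi_1$ as defined in~\eqref{xi-def}, together with the matching conditions, really does collapse the error dynamics to the clean form with closed-loop matrix $A_H$. In particular one must verify that the terms $-b^m Z_r^{m\top}(\sigma_1-\sigma_m)$ and $b^m\Upsilon_r^{m\top}e_1$ inside $\xi_1$, after multiplication by $B_m=B_1\Lambda K_{r1}^*$, combine with the $A_m\sigma_m$ term and the structure of $\sigma_i$ in~\eqref{nl_sys} (whose only nonlinearity sits in the first component, $\psi(x_{i,1})$) to produce $A_H e_1$ rather than leaving uncompensated nonlinear cross terms — here the special sparsity of $M$ and the identifications $Z^m$ (last row of $B_m$), $\Upsilon^m$ (last row of $A_m$), $b^m$ (average of the last row of $B_m$) are essential. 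A secondary technical point is the usual circularity in UUB arguments: the residual bound $\delta$ nominally depends on $\norm{\sigma_1}$, hence on $\norm{e_1}$ and $\norm{x_m}$; since $\sigma_1$ enters only through the known nonlinearity $\psi$ and the otherwise linear components, and $x_m$ is a priori bounded, I would handle this by noting the parameter errors remain bounded along trajectories (from $V$ nonincreasing outside the ball) so that $\norm{e_1}$ cannot escape in finite time, making the level-set argument rigorous.
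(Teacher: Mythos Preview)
Your overall architecture matches the paper's proof: derive the closed-loop error dynamics in the form $\dot e_1=A_He_1+B_1\Lambda(\tilde K_{m1}\sigma_1+\tilde K_{r1}\xi_1-\tilde\Theta_1\phi_1)$, choose a quadratic-plus-trace Lyapunov function, and let the adaptive laws~\eqref{al_nl2} kill the cross terms. Two points, however, deserve correction.

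\textbf{The Lyapunov function must carry the unknown $\Lambda$.} You propose
\[
V=e_1^\top P_1e_1+\operatorname{tr}\bigl(\tilde K_{m1}^\top\Gamma_m^{-1}\tilde K_{m1}\bigr)+\operatorname{tr}\bigl(\tilde K_{r1}^\top\Gamma_r^{-1}\tilde K_{r1}\bigr)+\operatorname{tr}\bigl(\tilde\Theta_1^\top\Gamma_\theta^{-1}\tilde\Theta_1\bigr),
\]
and then claim that the cross terms $2e_1^\top P_1B_1\Lambda(\cdot)$ ``cancel exactly against the trace derivatives once the adaptive laws~\eqref{al_nl2} are plugged in.'' They do not: the adaptive laws~\eqref{al_nl2} contain $B_1$ but not $\Lambda$ (which is unknown), so differentiating your trace terms produces, e.g., $-2\operatorname{tr}\bigl(\tilde K_{m1}\sigma_1e_1^\top P_1B_1\bigr)$, while the cross term from $\dot e_1$ is $2\operatorname{tr}\bigl(\Lambda\tilde K_{m1}\sigma_1e_1^\top P_1B_1\bigr)$. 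The mismatch leaves a sign-indefinite remainder proportional to $(\Lambda-I)\tilde K_{m1}$, which you cannot absorb into a bounded residual because $\tilde K_{m1}$ is not a~priori bounded. The paper avoids this by weighting each trace term with $\Lambda$, taking
\[
V=e_1^\top P_1e_1+\operatorname{tr}\bigl(\Lambda\tilde K_{m1}\Gamma_m^{-1}\tilde K_{m1}^\top\bigr)+\operatorname{tr}\bigl(\Lambda\tilde K_{r1}\Gamma_r^{-1}\tilde K_{r1}^\top\bigr)+\operatorname{tr}\bigl(\Lambda\tilde\Theta_1\Gamma_\theta^{-1}\tilde\Theta_1^\top\bigr),
\]
so that $\Lambda$ factors out of the combined trace and the bracket $\sigma_1e_1^\top P_1B_1+\Gamma_m^{-1}\dot K_{m1}^\top$ vanishes exactly. (Positive definiteness of $V$ then tacitly requires $\Lambda\succ0$.) This is the one structural ingredient your proposal is missing.

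\textbf{The residual is zero under the paper's hypothesis.} You carry a disturbance bound $\delta$ coming from the approximation error $\epsilon_1$ and arrive at $\dot V\le-\lambda_{\min}(Q)\norm{e_1}^2+2\norm{e_1}\,\norm{P_1}\,\norm{B_1\Lambda}\,\delta$, which yields UUB to a ball of positive radius. The paper, by contrast, assumes the exact parametrization $w_1(x_1)=\Theta_1^{*\top}\phi_1(x_1)$, so after the cancellation there is no residual at all: $\dot V=-e_1^\top Qe_1\le0$. Barbalat's lemma then gives $e_1\to0$. Your version is more robust (it tolerates a nonzero $\epsilon_1$), but it is a different statement; under the paper's standing assumption your $\delta$ is zero and the completing-the-square step is unnecessary.
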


\begin{proof}

The error dynamic $e_1=x_1-x_m$ expanded is
\begin{equation}
    \dot{e}_{1}=A_1\sigma_1+ B_1\Lambda(u_1+w_1(x_1))-A_m\sigma_m-B_mu_m.
    \label{ed1}
\end{equation}

Applying control law~\eqref{u1_nl_imu2}
\begin{align*}
    \dot{e}_{1}=A_1\sigma_1+ B_1\Lambda\left(K_{m1}\sigma_1+K_{r1}\xi_1-\Theta_1\phi_1(x_1)+w_1(x_1)\right)-A_m\sigma_m-B_mu_m,
\end{align*}
From~\eqref{xi-def} in $u_m$, we can let the equation in terms of the augmented input $\xi_1$
\begin{align*}
    \dot{e}_{1}&=A_1\sigma_1+ B_1\Lambda\left(K_{m1}\sigma_1+K_{r1}\xi_1-\Theta_1\phi_1(x_1)+w_1(x_1)\right)-A_m\sigma_m\\
    &-B_m\left(\xi_1+b^{m}Z^{m\top}_r\left(\sigma_1-\sigma_m\right)-b^{m}\Upsilon_r^{m\top}e_1\right).
\end{align*}

Adding $\pm A_m\sigma_1$, and grouping similar terms
\begin{align*}
    \dot{e}_{1}&=(A_1-A_m)\sigma_1+ B_1\Lambda\left(K_{m1}\sigma_1+K_{r1}\xi_1-\Theta_1\phi_1(x_1)+w_1(x_1)\right)+A_m(\sigma_1-\sigma_m)\\
    &-B_m\left(\xi_1+b^{m}Z^{m\top}_r\left(\sigma_1-\sigma_m\right)-b^{m}\Upsilon_r^{m\top}e_1\right).
\end{align*}
expanding terms with the $H$ definition,
\begin{align*}
    \dot{e}_{1}&=(A_1-A_m)\sigma_1+ B_1\Lambda\left(K_{m1}\sigma_1+K_{r1}\xi_1-\Theta_1\phi_1(x_1)+w_1(x_1)\right)+A_m(\sigma_1-\sigma_m)-B_m\xi_1\\
    &-HZ^{m\top}_r\left(\sigma_1-\sigma_m\right)+H\Upsilon_r^{m\top}e_1.
\end{align*}

Considering that \mbox{$A_m(\sigma_1-\sigma_m)=Me_1+HZ_r^{m\top}(\sigma_1-\sigma_m)$} from the definition of~\eqref{eq_m}, then
\begin{align*}
    \dot{e}_{1}=Me_1+ B_1\Lambda\left(K_{m1}\sigma_1+K_{r1}\xi_1-\Theta_1\phi_1(x_1)+w_1(x_1)\right)+(A_1-A_m)\sigma_1-B_m\xi_1+H\Upsilon_r^{m\top}e_1.
\end{align*}
with the definition of $A_H$, we have
\begin{align*}
    \dot{e}_{1}=A_He_1+ B_1\Lambda\left(K_{m1}\sigma_1+K_{r1}\xi_1-\Theta_1\phi_1(x_1)+w_1(x_1)\right)+(A_1-A_m)\sigma_1-B_m\xi_1.
\end{align*}

Considering the input uncertainty approximation term $w_1(x)=\Theta^{*}_1\phi_1(x_1)$,
\begin{align*}
    \dot{e}_{1}=A_He_1+ B_1\Lambda\left(K_{m1}\sigma_1+K_{r1}\xi_1-\Theta_1\phi_1(x_1)+\Theta^{*}_1\phi_1(x_1)\right)+(A_1-A_m)\sigma_1-B_m\xi_1.
\end{align*}
with the matching condition~\eqref{fmc}, we obtain
\begin{align*}
    \dot{e}_{1}=A_He_1+ B_1\Lambda\left(K_{m1}\sigma_1+K_{r1}\xi_1-\Theta_1\phi_1(x_1)+\Theta^{*}_1\phi_1(x_1)\right)-B_1\Lambda K^{*}_{m1}\sigma_1-B_1\Lambda K^{*}_{r1}\xi_1.
\end{align*}

Grouping similar terms related with the parameters of the controller, we have
\begin{align}
    \dot{e}_{1} =A_{H}e_{1}+B_1\Lambda \left[(K_{m1}-K^{*}_{m1})\sigma_1+(K_{r1}-K^{*}_{r1})\xi_1-(\Theta_1-\Theta^{*}_1)\phi_1(x_1)\right].
\end{align}

Considering the estimation errors $\tilde{K}_{m1}=K_{m1}-K^*_{m1}$,  $\tilde{K}_{r1}=K_{r1}-K^*_{r1}$, $\tilde{\Theta}_1=\Theta_1-\Theta^*_1$, then the error dynamics is
\begin{align}
    \dot{e}_{1}=A_{H}e_{1}+B_1\Lambda \left[\tilde{K}_{m1}\sigma_1+\tilde{K}_{r1}\xi_1-\tilde{\Theta}_1\phi_1(x_1)\right],
    \label{err_2}
\end{align}

Now, consider the following Lyapunov function
\begin{align}
    V &=e^\top_1P_1e_1+\text{tr}\left (\Lambda\tilde{K}_{m1}\Gamma^{-1}_{m}\tilde{K}^\top_{m1} \right )+\text{tr}\left (\Lambda\tilde{K}_{r1}\Gamma^{-1}_{r}\tilde{K}^\top_{r1} \right )+\text{tr}\left(\Lambda\tilde{\Theta}_1\Gamma ^{-1}_\theta\tilde{\Theta}^\top_1\right).
    \label{lyapunov_1}
\end{align}

The time derivative of~\eqref{lyapunov_1} along the error $e_1$ is
\begin{align}
    \dot{V}=\dot{e}^\top_1 P_1 e_1+e_{1}^\top P_1\dot{e}_{1}+2\text{tr}\left(\Lambda \tilde{K}_{m1}\Gamma^{-1}_{m}\dot{\tilde{K}}^\top_{m1}\right)+2\text{tr}\left(\Lambda \tilde{K}_{r1}\Gamma^{-1}_{r}\dot{\tilde{K}}^\top_{r1}\right)+2\text{tr}\left(\Lambda \tilde{\Theta}_1\Gamma ^{-1}_\theta\dot{\tilde{\Theta}}^\top_1\right),
    \label{dlyapg2}
\end{align}
which expanded through the definition of the error dynamics~\eqref{err_2} gives us
\begin{align*}
    \dot{V}&=\left(A_{H}e_{1}+B_1\Lambda\left[\tilde{K}_{m1}\sigma_1+\tilde{K}_{r1}\xi_{1}-\tilde{\Theta}_1\phi_1(x_1)\right]\right)^\top P_1e_{1}\\
    &+e^\top_{1}P_1\left(A_{H}e_{1}+B_1\Lambda\left[\tilde{K}_{m1}\sigma_1+\tilde{K}_{r1}\xi_{1}-\tilde{\Theta}_1\phi_1(x_1)\right]\right)\\
    &+2\text{tr}\left(\Lambda\tilde{K}_{m1}\Gamma^{-1}_{m}\dot{\tilde{K}}^\top_{m1}\right)+2\text{tr}\left(\Lambda\tilde{K}_{r1}\Gamma^{-1}_{r}\dot{\tilde{K}}^\top_{r1}\right)+2\text{tr}\left(\Lambda\tilde{\Theta}_1\Gamma ^{-1}_\theta\dot{\tilde{\Theta}}^\top_1\right).
\end{align*}

Grouping relative terms associated with the adaptive laws $K_{m1},K_{r1},\Theta_1$, that implies
\begin{align*}
    \dot{V}&=e^\top_{1}A^\top_{H}P_1e_{1}+e^\top_{1}P_1A_{H}e_{1}+2\left[e^\top_{1}P_1B_1\Lambda\tilde{K}_{m1}\sigma_1+\text{tr}\left(\Lambda\tilde{K}_{m1}\Gamma^{-1}_{m}\dot{\tilde{K}}^\top_{m1}\right)\right]\\
    &+2\left[e^\top_{1}P_1B_1\Lambda\tilde{K}_{r1}\xi_{1}+\text{tr}\left(\Lambda\tilde{K}_{r1}\Gamma^{-1}_{r}\dot{\tilde{K}}^\top_{r1}\right)\right]\\
    &+2\left[e^\top_{1}P_1B_1\Lambda\tilde{\Theta}_{1}\phi_1(x_1)+\text{tr}\left(\Lambda\tilde{\Theta}_1\Gamma ^{-1}_\theta\dot{\tilde{\Theta}}^\top_1\right)\right],
\end{align*}
considering the trace property of tr$(CD^\top)=D^\top C$, with $C,D\in\mathbb{R}^n$, and the definition of $P_1$, we can rewrite the derivative as
\begin{align*}
    \dot{V}&=-e^\top_{1}Qe_{1}+2\text{tr}\left(\Lambda\tilde{K}_{m1}\sigma_1e^\top_{1}P_1B_1+\Lambda\tilde{K}_{m1}\Gamma^{-1}_{m}\dot{\tilde{K}}^\top_{m1}\right)+2\text{tr}\left(\Lambda\tilde{K}_{r1}\xi_{1}e^\top_{1}P_1B_1+\Lambda\tilde{K}_{r1}\Gamma^{-1}_{r}\dot{\tilde{K}}^\top_{r1}\right) \\
    &+2\text{tr}\left(\Lambda\tilde{\Theta}_1\phi_1(x_1)e^\top_{1}P_1B_1+\Lambda\tilde{\Theta}_1\Gamma ^{-1}_\theta\dot{\tilde{\Theta}}^\top_1\right),
\end{align*}
factorizing $\Lambda$, we can obtain,
\begin{align*}
    \dot{V}&=-e^\top_{1}Qe_{1}+2\Lambda\text{tr}\left(\tilde{K}_{m1}\sigma_1e^\top_{1}P_1B_1+\tilde{K}_{m1}\Gamma^{-1}_{m}\dot{\tilde{K}}^\top_{m1}\right)
    +2\Lambda\text{tr}\left(\tilde{K}_{r1}\xi_{1}e^\top_{1}P_1B_1+\tilde{K}_{r1}\Gamma^{-1}_{r}\dot{\tilde{K}}^\top_{r1}\right) \\
    &+2\Lambda\text{tr}\left(\tilde{\Theta}_1\phi_1(x_1)e^\top_{1}P_1B_1+\tilde{\Theta}_1\Gamma ^{-1}_\theta\dot{\tilde{\Theta}}^\top_1\right),
\end{align*}
grouping in terms of the estimators $\tilde{K}_{m1},\tilde{K}_{r1},\tilde{\Theta}_1$, we have
\begin{align*}
    \dot{V}&=-e^\top_{1}Qe_{1}+2\Lambda\text{tr}\left(\tilde{K}_{m1}\left(\sigma_1e^\top_{1}P_1B_1+\Gamma^{-1}_{m}\dot{\tilde{K}}^\top_{m1}\right)\right)
    +2\Lambda\text{tr}\left(\tilde{K}_{r1}\left(\xi_{1}e^\top_{1}P_1B_1+\Gamma^{-1}_{r}\dot{\tilde{K}}^\top_{r1}\right)\right) \\
    &+2\Lambda\text{tr}\left(\tilde{\Theta}_1\left(\phi_1(x_1)e^\top_{1}P_1B_1+\Gamma ^{-1}_\theta\dot{\tilde{\Theta}}^\top_1\right)\right).
\end{align*}

Because $K_{m1},K_{r1},\Theta_1$ are constants, therefore $\dot{\tilde{K}}_{m1}=\dot{K}_{m1}$, $\dot{\tilde{K}}_{r1}=\dot{K}_{r1}$ and $\dot{\tilde{\Theta}}_1=\dot{\Theta}_1$, then we can reduce to
\begin{equation}
    \dot{V}=-e^\top_1Qe_1\leq-\lambda_\text{min}\left(Q\right)\norm{e_1}^2\leq 0,
\end{equation}
where using Barbalat's lemma~\cite{farkas2016variations} and with definition~\ref{uub}, the synchronization error is UUB with \eqref{lyapunov_1} as a valid Lyapunov function.
\end{proof}

Along with the analysis for leaders, the next section presents the procedures for synchronization in follower agents.

\section{Distributed Model Reference Adaptive Control with Reinforcement Learning} \label{DMRACRL}
This section presents the main contribution of this work of a DMRAC-RL for follower MIMO agents with input uncertainty parameters. We consider a network of heterogeneous agents, where each agent is represented by dynamics \eqref{nonlinear-system-imu}. In the distributed case, the control law used for the synchronization of agents that do not have communication with the reference is
\begin{align}
    u_i=\sum_{j=1}^{N}a_{ij}{{K}_{ij}}\sigma_j(x_j)+K_{mi}\Xi_{i}+\sum_{j=1}^{N}a_{ij}K_{rij}\xi_{ij}
    +\sum_{j=1}^{N}a_{ij}\Theta_{j}\phi_{j}-\Theta_i\phi_i(x_i),    \label{u_th3}
\end{align}
with the synchronization error $e_{ij}=x_i-x_j$, the augmented input $\Xi_{i}=\sum_{j=1}^{N}a_{ij}\left(\sigma_i-\sigma_j\right)$, and
\begin{equation}
    \xi_{ij}:=u_j-b^jZ^{j\top}_r\left(\sigma_i-\sigma_j\right)+b^j\Upsilon_r^{j\top}e_{ij},
    \label{augmented}
\end{equation}
with $Z^{j}_r$ as a positive definite matrix, $\Upsilon^j_r$ as a $n-$dimensional matrix picked with strictly negative components, and $b^j$ as the average of the elements of the last row of the matrix $B_j$. The adaptive laws used in this case are
\begin{subequations}
\begin{align} \label{al}
    {\dot{K}_{ij}}=&-\Gamma_{ij}\sigma_j(x_j)e^\top_{ij}P_i {B_i}, \\
    {\dot{K}_{mi}}=&-\Gamma_m\Xi_{i}e^\top_{ij}P_i{B_i},\\
    \dot{K}_{rij}=&-\Gamma_{r}\xi_{ij} e^\top_{ij}P_i {B_i}, \\
    \dot{\Theta}_{j}=&-\Gamma_{\phi}\phi_j(x_j) e^\top_{ij}P_i {B_i}, \\
    \dot{\Theta}_{i}=&-\Gamma_{\theta}\phi_i(x_i) e^\top_{ij}P_i {B_i}.
\end{align}   
\end{subequations}

with $\Gamma_{ij}\succ 0$, $\Gamma_{m}\succ 0$, $\Gamma_{r}\succ 0$, $\Gamma_{\theta}\succ 0$, $\Gamma_{\phi}\succ 0$, and $P_i$ that is the solution of the linear Lyapunov function
\begin{equation}
    P_iA_{Hj}+{A^\top_{Hj}}P_i=-Q_i, \hspace{0.5cm}Q_i\succ 0,
    \label{eq6_2}
\end{equation}
where $\sum_{j=1}^{N}a_{ij}A_{Hj}=M+\sum_{j=1}^{N}a_{ij}H_j\Upsilon_r^{m\top}$, with $B_j\Lambda b^j=H_j$, in the case with just one reference model $P_i=P_1$. The following lemma presents the stability results for a general distributed case. \\

\begin{lemma} \label{lem_nl}
Let Assumptions \ref{assum:graphs}-\ref{asuum:one_agent} hold. Consider a network of systems~\eqref{nonlinear-system-imu} with a reference system~\eqref{nl_ref}, and control and adaptive laws \eqref{u_th3}--\eqref{al}. Then, function 
\begin{align}
    V &=\sum_{i=1}^{N}\sum_{j=1}^{N}a_{ij}e_{ij}^\top P_ie_{ij}+ \sum_{i=1}^{N} \text{tr}\left (\Lambda\tilde{K}_{mi}\Gamma_{m}\tilde{K}^\top_{mi} \right ) + \sum_{i=1}^{N} \sum_{j=1}^{N}a_{ij} \text{tr}\left (\Lambda{\tilde{K}_{ij}}\Gamma_{ij}\tilde{K}^\top_{ij} \right ) 
     \nonumber \\
    &+ \sum_{i=1}^{N}\sum_{j=1}^{N}a_{ij}\text{tr}\left (\Lambda{\tilde{K}_{rij}}\Gamma_{r}\tilde{K}^\top_{rij} \right )+ \sum_{i=1}^{N} \sum_{j=1}^{N}a_{ij}\; \text{tr}\left (\Lambda{\tilde{\Theta}_{j}}\Gamma_{\phi}\tilde{\Theta}^\top_{j} \right )
    +\sum_{i=1}^{N}\text{tr}(\Lambda\tilde{\Theta}_i\Gamma^{-1}_{\Theta}\tilde{\Theta}^\top_i),
    \label{lyapunov_nl}
\end{align}

is a valid Lyapunov function.
\end{lemma}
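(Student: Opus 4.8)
The plan is to mirror the single-leader argument of Proposition~\ref{prop:central}, now carried out edge-by-edge over the graph $\mathcal{G}$ and then summed. First I would fix an arbitrary edge $(j,i)\in E$ and write the synchronization-error dynamics $\dot{e}_{ij}=\dot{x}_i-\dot{x}_j$ by substituting the agent dynamics~\eqref{nonlinear-system-imu} for both $x_i$ and $x_j$, and then inserting the distributed control law~\eqref{u_th3} in place of $u_i$. The neighbor's control action $u_j$ enters only through the augmented signal $\xi_{ij}$ of~\eqref{augmented}, so the $B_j\Lambda u_j$ contribution is absorbed exactly as the $B_m u_m$ term was absorbed in the proof of Proposition~\ref{prop:central}.

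Next I would rewrite everything in terms of agent $i$'s matrices using the matching conditions: Assumption~\ref{assum:coupling-mc} eliminates $A_j$ and $B_j$, Assumption~\ref{assum:uncertainties} eliminates $B_j\Lambda$, and, along the spanning tree rooted at the leader (Assumption~\ref{asuum:one_agent}), the feedback conditions of Assumption~\ref{assum:feedback-mc} tie the leader's error to the reference model. After adding and subtracting the appropriate $\sigma$-terms, invoking the definitions of $M$, $H_j$, and $A_{Hj}$, and using $w_i(x_i)=\Theta_i^{*\top}\phi_i(x_i)$ and $w_j(x_j)=\Theta_j^{*\top}\phi_j(x_j)$, the error dynamics should collapse to a form analogous to~\eqref{err_2}, namely
\begin{align*}
    \dot{e}_{ij}=A_{Hj}e_{ij}+B_i\Lambda\Big[\tilde{K}_{ij}\sigma_j(x_j)+\tilde{K}_{mi}\Xi_i+\tilde{K}_{rij}\xi_{ij}+\sum_{k}a_{ik}\tilde{\Theta}_k\phi_k-\tilde{\Theta}_i\phi_i(x_i)\Big],
\end{align*}
i.e. a homogeneous term $A_{Hj}e_{ij}$ plus a linear combination of the parameter estimation errors $\tilde{K}_{ij}$, $\tilde{K}_{mi}$, $\tilde{K}_{rij}$, $\tilde{\Theta}_j$, $\tilde{\Theta}_i$, with the neighbor $j$ playing the role that the reference model played for the leader.

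With the error dynamics in hand, I would differentiate $V$ in~\eqref{lyapunov_nl} along trajectories. Each quadratic term $a_{ij}e_{ij}^\top P_i e_{ij}$ produces, via~\eqref{eq6_2} and the aggregation $\sum_j a_{ij}A_{Hj}=M+\sum_j a_{ij}H_j\Upsilon_r^{m\top}$, a negative contribution $-a_{ij}e_{ij}^\top Q_i e_{ij}$ together with cross terms $e_{ij}^\top P_i B_i\Lambda(\cdot)$; each such cross term is paired with the derivative of the matching trace term, and substituting the adaptive laws~\eqref{al} — using that $\tilde{K}$, $\tilde{\Theta}$ differ from the constants $K^*$, $\Theta^*$ only by a constant, so $\dot{\tilde{K}}=\dot{K}$ and $\dot{\tilde{\Theta}}=\dot{\Theta}$ — cancels all of them, just as in Proposition~\ref{prop:central}. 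What remains is $\dot{V}=-\sum_{i=1}^N\sum_{j=1}^N a_{ij}e_{ij}^\top Q_i e_{ij}\le-\sum_{i,j}a_{ij}\lambda_{\min}(Q_i)\norm{e_{ij}}^2\le 0$; since $V\ge 0$, Barbalat's lemma together with Definition~\ref{uub} yields UUB synchronization errors, so $V$ is a valid Lyapunov function.

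The main obstacle I anticipate is the bookkeeping in the second step. Unlike the single-leader case, an agent $i$ may have several in-neighbors, so the coupling gains $K_{ij}$, the augmented inputs $\xi_{ij}$, and the uncertainty terms $\Theta_j\phi_j$ all appear as sums inside $u_i$, and one must verify that these sums recombine consistently with $\Xi_i$ and with the single Lyapunov matrix $P_i$ attached to the aggregate $\sum_j a_{ij}A_{Hj}$. Here Assumptions~\ref{assum:graphs} and~\ref{asuum:one_agent} are essential: acyclicity rules out circular parameter dependencies that would obstruct the per-edge reduction, and the spanning tree rooted at the leader guarantees that the chain of matching conditions~\eqref{MC}--\eqref{fmc} actually closes, so that every $A_{Hj}$ is well defined relative to the reference model and the per-edge error dynamics above are legitimate.
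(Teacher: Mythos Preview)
Your proposal follows essentially the same route as the paper: derive the closed-loop error dynamics via the matching conditions~\eqref{MC}--\eqref{fmc} and~\eqref{u_MC}, differentiate $V$, pair each cross term $e_{ij}^\top P_iB_i\Lambda(\cdot)$ with the matching trace term, cancel using the adaptive laws~\eqref{al}, and conclude $\dot V=-\sum_{i,j}a_{ij}e_{ij}^\top Q_ie_{ij}\le 0$ followed by Barbalat. The one framing difference is that the paper does \emph{not} isolate a single-edge expression for $\dot e_{ij}$; since $u_i$ in~\eqref{u_th3} already contains sums over all in-neighbors, the paper instead manipulates the aggregate $\sum_{j}a_{ij}\dot e_{ij}$ from the outset and arrives at
\[
\sum_{j}a_{ij}\dot e_{ij}=\sum_{j}a_{ij}A_{Hj}e_{ij}+\sum_{j}a_{ij}B_i\Lambda\bigl(\tilde K_{mi}(\sigma_i-\sigma_j)+\tilde K_{ij}\sigma_j+\tilde K_{rij}\xi_{ij}+\tilde\Theta_j\phi_j-\tilde\Theta_i\phi_i\bigr),
\]
so the ``bookkeeping obstacle'' you anticipate (the $\Xi_i$ and $\sum_k a_{ik}\Theta_k\phi_k$ sums leaking across edges) never arises---it is absorbed directly into the aggregated identity rather than reconciled after the fact. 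Your displayed per-edge formula is therefore not literally correct as written, but once you pass to the neighbor sum it becomes exactly the paper's expression, and the remainder of your argument matches the paper line for line.
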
 

\begin{proof}
With the error $e_i=x_i-x_m$ defined in Proposition~\ref{prop:central}. In this case, the error dynamic for an agent~$i$ connected to an agent $j$, expanded is
\begin{equation}
    \dot{e}_{ij}=A_i\sigma_i(x_i)+B_i\Lambda (u_i+w_i(x_i))-A_j\sigma_j-B_j\Lambda (u_j+\phi_j).
\end{equation}

Analyzing the error for an agent $i$ and its neighbors in the network, the synchronization error can be defined as
\begin{align}
    \sum_{j=1}^{N}a_{ij}\dot{e}_{ij}=A_i\sigma_i(x_i)+B_i\Lambda (u_i+w_i(x_i))-\sum_{j=1}^{N}a_{ij}A_j\sigma_j
    -\sum_{j=1}^{N}a_{ij}B_j\Lambda (u_j+\phi_j),
\end{align}
where using the control law~\eqref{u_th3}, we can have
\begin{align}
    \sum_{j=1}^{N}a_{ij}\dot{e}_{ij}&=A_i\sigma_i(x_i)+B_i\Lambda (\sum_{j=1}^{N}a_{ij}{{K}_{ij}}\sigma_j(x_j)+K_{mi}\Xi_{i}+\sum_{j=1}^{N}a_{ij}K_{rij}\xi_{ij}+\sum_{j=1}^{N}a_{ij}\Theta_{j}\phi_{j}-\Theta_i\phi_i(x_i)+w_i(x_i)) \nonumber \\
    &-\sum_{j=1}^{N}a_{ij}A_j\sigma_j-\sum_{j=1}^{N}a_{ij}B_j\Lambda (u_j+\phi_j),
\end{align}
expanding the terms related with $B_j$,
\begin{align}
    \sum_{j=1}^{N}a_{ij}\dot{e}_{ij}&=A_i\sigma_i(x_i)+B_i\Lambda (\sum_{j=1}^{N}a_{ij}{{K}_{ij}}\sigma_j(x_j)+K_{mi}\Xi_{i}+\sum_{j=1}^{N}a_{ij}K_{rij}\xi_{ij}\nonumber \\
    &+\sum_{j=1}^{N}a_{ij}\Theta_{j}\phi_{j}-\Theta_i\phi_i(x_i)+w_i(x_i))-\sum_{j=1}^{N}a_{ij}A_j\sigma_j -\sum_{j=1}^{N}a_{ij}B_j\Lambda u_j - \sum_{j=1}^{N}a_{ij}B_j\Lambda \phi_j.
\end{align}
considering then, the augmented input~\eqref{augmented}, we have
\begin{align}
    \sum_{j=1}^{N}a_{ij}\dot{e}_{ij}&=A_i\sigma_i(x_i)+B_i\Lambda (\sum_{j=1}^{N}a_{ij}{{K}_{ij}}\sigma_j(x_j)+K_{mi}\Xi_{i}+\sum_{j=1}^{N}a_{ij}K_{rij}\xi_{ij}+\sum_{j=1}^{N}a_{ij}\Theta_{j}\phi_{j}-\Theta_i\phi_i(x_i)+w_i(x_i)) \nonumber \\
    &-\sum_{j=1}^{N}a_{ij}A_j\sigma_j-\sum_{j=1}^{N}a_{ij}B_j\Lambda (\xi_{ij}+b^jZ^{j\top}_r\left(\sigma_i-\sigma_j\right)-b^j\Upsilon_r^{j\top}e_{ij})  - \sum_{j=1}^{N}a_{ij}B_j\Lambda \phi_j,
\end{align}
with the definition of $H_j=B_j\Lambda b^j$,
\begin{align}
    \sum_{j=1}^{N}a_{ij}\dot{e}_{ij}&=A_i\sigma_i(x_i) \nonumber \\
    &+B_i\Lambda (\sum_{j=1}^{N}a_{ij}{{K}_{ij}}\sigma_j(x_j)+K_{mi}\Xi_{i}+\sum_{j=1}^{N}a_{ij}K_{rij}\xi_{ij} +\sum_{j=1}^{N}a_{ij}\Theta_{j}\phi_{j}-\Theta_i\phi_i(x_i)+w_i(x_i))\nonumber \\
    &-\sum_{j=1}^{N}a_{ij}A_j\sigma_j-\sum_{j=1}^{N}a_{ij}B_j\Lambda\xi_{ij}-\sum_{j=1}^{N}H_jZ^{j\top}_r\left(\sigma_i-\sigma_j\right)+\sum_{j=1}^{N}H_j\Upsilon_r^{j\top}e_{ij}- \sum_{j=1}^{N}a_{ij}B_j\Lambda \phi_j.
\end{align}

Using the coupling matching conditions~\eqref{MC}, and replacing $A_j$, and $B_j$
\begin{align}
    \sum_{j=1}^{N}a_{ij}\dot{e}_{ij}&=A_i\sigma_i(x_i) \nonumber\\
    &+B_i\Lambda (\sum_{j=1}^{N}a_{ij}{{K}_{ij}}\sigma_j(x_j)+K_{mi}\Xi_{i}+\sum_{j=1}^{N}a_{ij}K_{rij}\xi_{ij}+\sum_{j=1}^{N}a_{ij}\Theta_{j}\phi_{j}-\Theta_i\phi_i(x_i)+w_i(x_i)) \nonumber \\
    &-\sum_{j=1}^{N}a_{ij}(A_i+{B_i}\Lambda K_{ij}^{*})\sigma_j-\sum_{j=1}^{N}a_{ij}{B_i}\Lambda K^{*}_{rij}\xi_{ij}-\sum_{j=1}^{N}H_jZ^{j\top}_r\left(\sigma_i-\sigma_j\right)+\sum_{j=1}^{N}H_j\Upsilon_r^{j\top}e_{ij} \nonumber \\
    &- \sum_{j=1}^{N}a_{ij}B_j\Lambda \phi_j.
\end{align}
Expanding the $(A_i+{B_i}\Lambda K_{ij}^{*})$ term
\begin{align}
    \sum_{j=1}^{N}a_{ij}\dot{e}_{ij}&=A_i\sigma_i(x_i) \nonumber \\
    &+B_i\Lambda (\sum_{j=1}^{N}a_{ij}{{K}_{ij}}\sigma_j(x_j)+K_{mi}\Xi_{i}+\sum_{j=1}^{N}a_{ij}K_{rij}\xi_{ij}+\sum_{j=1}^{N}a_{ij}\Theta_{j}\phi_{j}-\Theta_i\phi_i(x_i)+w_i(x_i)) \nonumber \\
    &-\sum_{j=1}^{N}a_{ij}A_i\sigma_j-\sum_{j=1}^{N}{B_i}\Lambda K_{ij}^{*}\sigma_j-\sum_{j=1}^{N}a_{ij}{B_i}\Lambda K^{*}_{rij}\xi_{ij}-\sum_{j=1}^{N}H_jZ^{j\top}_r\left(\sigma_i-\sigma_j\right)+\sum_{j=1}^{N}H_j\Upsilon_r^{j\top}e_{ij} \nonumber \\
    &- \sum_{j=1}^{N}a_{ij}B_j\Lambda \phi_j,
\end{align}
grouping then with respect to $A_i$,
\begin{align}
    \sum_{j=1}^{N}a_{ij}\dot{e}_{ij}&=A_i\sum_{j=1}^{N}a_{ij}(\sigma_i(x_i)-\sigma_j(x_j))+B_i\Lambda (\sum_{j=1}^{N}a_{ij}{{K}_{ij}}\sigma_j(x_j)+K_{mi}\Xi_{i}+\sum_{j=1}^{N}a_{ij}K_{rij}\xi_{ij} \nonumber \\
    &+\sum_{j=1}^{N}a_{ij}\Theta_{j}\phi_{j}-\Theta_i\phi_i(x_i)+w_i(x_i))-\sum_{j=1}^{N}{B_i}\Lambda K_{ij}^{*}\sigma_j-\sum_{j=1}^{N}a_{ij}{B_i}\Lambda K^{*}_{rij}\xi_{ij} \nonumber \\
    &-\sum_{j=1}^{N}H_jZ^{j\top}_r\left(\sigma_i-\sigma_j\right)+\sum_{j=1}^{N}H_j\Upsilon_r^{j\top}e_{ij}- \sum_{j=1}^{N}a_{ij}B_j\Lambda \phi_j.
\end{align}

Now using the feedback matching conditions~\eqref{fmc} for $A_i$, we can have
\begin{align}
    \sum_{j=1}^{N}a_{ij}\dot{e}_{ij}&=(A_m-B_i\Lambda K^{*}_{mi})\sum_{j=1}^{N}a_{ij}(\sigma_i-\sigma_j)+B_i\Lambda (\sum_{j=1}^{N}a_{ij}{{K}_{ij}}\sigma_j(x_j)+K_{mi}\Xi_{i}+\sum_{j=1}^{N}a_{ij}K_{rij}\xi_{ij} \nonumber \\
    &+\sum_{j=1}^{N}a_{ij}\Theta_{j}\phi_{j}-\Theta_i\phi_i(x_i)+w_i(x_i))-\sum_{j=1}^{N}{B_i}\Lambda K_{ij}^{*}\sigma_j-\sum_{j=1}^{N}a_{ij}{B_i}\Lambda K^{*}_{rij}\xi_{ij} \nonumber \\
    &-\sum_{j=1}^{N}H_jZ^{j\top}_r\left(\sigma_i-\sigma_j\right)+\sum_{j=1}^{N}H_j\Upsilon_r^{j\top}e_{ij}- \sum_{j=1}^{N}a_{ij}B_j\Lambda \phi_j,
\end{align}
then, with the definition of
\begin{equation}
    A_m\sum_{j=1}^{N}a_{ij}(\sigma_i-\sigma_j)=M\sum_{j=1}^{N}a_{ij}e_{ij}+\sum_{j=1}^{N}a_{ij}H_jZ^{j\top}_r\left(\sigma_i-\sigma_j\right),
\end{equation}
and expanding the terms related with the difference of $(\sigma_i-\sigma_j)$
\begin{align}
    \sum_{j=1}^{N}a_{ij}\dot{e}_{ij}&=A_m\sum_{j=1}^{N}a_{ij}(\sigma_i-\sigma_j)-B_i\Lambda K^{*}_{mi}\sum_{j=1}^{N}a_{ij}(\sigma_i-\sigma_j) \nonumber\\
    &+B_i\Lambda \left(\sum_{j=1}^{N}a_{ij}{{K}_{ij}}\sigma_j(x_j)+K_{mi}\Xi_{i}+\sum_{j=1}^{N}a_{ij}K_{rij}\xi_{ij}+\sum_{j=1}^{N}a_{ij}\Theta_{j}\phi_{j}-\Theta_i\phi_i(x_i)+w_i(x_i)\right) \nonumber \\
    &-\sum_{j=1}^{N}{B_i}\Lambda K_{ij}^{*}\sigma_j-\sum_{j=1}^{N}a_{ij}{B_i}\Lambda K^{*}_{rij}\xi_{ij}-\sum_{j=1}^{N}H_jZ^{j\top}_r\left(\sigma_i-\sigma_j\right)+\sum_{j=1}^{N}H_j\Upsilon_r^{j\top}e_{ij}- \sum_{j=1}^{N}a_{ij}B_j\Lambda \phi_j,
\end{align}
then we have,
\begin{align}
    \sum_{j=1}^{N}a_{ij}\dot{e}_{ij}&=M\sum_{j=1}^{N}a_{ij}e_{ij}-B_i\Lambda K^{*}_{mi}\sum_{j=1}^{N}a_{ij}(\sigma_i-\sigma_j)+B_i\Lambda (\sum_{j=1}^{N}a_{ij}{{K}_{ij}}\sigma_j(x_j)+K_{mi}\Xi_{i}+\sum_{j=1}^{N}a_{ij}K_{rij}\xi_{ij} \nonumber \\
    &+\sum_{j=1}^{N}a_{ij}\Theta_{j}\phi_{j}-\Theta_i\phi_i(x_i)+w_i(x_i))-\sum_{j=1}^{N}{B_i}\Lambda K_{ij}^{*}\sigma_j-\sum_{j=1}^{N}a_{ij}{B_i}\Lambda K^{*}_{rij}\xi_{ij}+\sum_{j=1}^{N}H_j\Upsilon_r^{j\top}e_{ij} \nonumber \\
    &- \sum_{j=1}^{N}a_{ij}B_j\Lambda \phi_j,
\end{align}
grouping by $e_{ij}$
\begin{align}
    \sum_{j=1}^{N}a_{ij}\dot{e}_{ij}&=\sum_{j=1}^{N}a_{ij}(M+H_j\Upsilon_r^{j\top})e_{ij}-B_i\Lambda K^{*}_{mi}\sum_{j=1}^{N}a_{ij}(\sigma_i-\sigma_j) \nonumber \\
    &+B_i\Lambda \left(\sum_{j=1}^{N}a_{ij}{{K}_{ij}}\sigma_j(x_j)+K_{mi}\Xi_{i}+\sum_{j=1}^{N}a_{ij}K_{rij}\xi_{ij}+\sum_{j=1}^{N}a_{ij}\Theta_{j}\phi_{j}-\Theta_i\phi_i(x_i)+w_i(x_i)\right) \nonumber \\
    &-\sum_{j=1}^{N}{B_i}\Lambda K_{ij}^{*}\sigma_j-\sum_{j=1}^{N}a_{ij}{B_i}\Lambda K^{*}_{rij}\xi_{ij}- \sum_{j=1}^{N}a_{ij}B_j\Lambda \phi_j,
\end{align}

with the definition of $\sum_{j=1}^{N}a_{ij}A_{Hj}=M+\sum_{j=1}^{N}a_{ij}H_j\Upsilon_r^{j\top}$, we have
\begin{align}
    \sum_{j=1}^{N}a_{ij}\dot{e}_{ij}&=\sum_{j=1}^{N}a_{ij}A_{Hj}e_{ij}-B_i\Lambda K^{*}_{mi}\sum_{j=1}^{N}a_{ij}(\sigma_i-\sigma_j) \nonumber \\
    &+B_i\Lambda \left(\sum_{j=1}^{N}a_{ij}{{K}_{ij}}\sigma_j(x_j)+K_{mi}\Xi_{i}+\sum_{j=1}^{N}a_{ij}K_{rij}\xi_{ij}+\sum_{j=1}^{N}a_{ij}\Theta_{j}\phi_{j}-\Theta_i\phi_i(x_i)+w_i(x_i)\right) \nonumber \\
    &-\sum_{j=1}^{N}{B_i}\Lambda K_{ij}^{*}\sigma_j-\sum_{j=1}^{N}a_{ij}{B_i}\Lambda K^{*}_{rij}\xi_{ij}- \sum_{j=1}^{N}a_{ij}B_j\Lambda \phi_j.
\end{align}

Considering the input uncertainty approximation terms as $w_i(x)=\Theta^{*}_i\phi_i(x_i)$,
\begin{align}
    \sum_{j=1}^{N}a_{ij}\dot{e}_{ij}&=\sum_{j=1}^{N}a_{ij}A_{Hj}e_{ij}-B_i\Lambda K^{*}_{mi}\sum_{j=1}^{N}a_{ij}(\sigma_i-\sigma_j) \nonumber \\
    &+B_i\Lambda \left(\sum_{j=1}^{N}a_{ij}{{K}_{ij}}\sigma_j(x_j)+K_{mi}\Xi_{i}+\sum_{j=1}^{N}a_{ij}K_{rij}\xi_{ij}+\sum_{j=1}^{N}a_{ij}\Theta_{j}\phi_{j}-\Theta_i\phi_i(x_i)+\Theta^{*}_i\phi_i(x_i)\right) \nonumber \\
    &-\sum_{j=1}^{N}{B_i}\Lambda K_{ij}^{*}\sigma_j-\sum_{j=1}^{N}a_{ij}{B_i}\Lambda K^{*}_{rij}\xi_{ij}- \sum_{j=1}^{N}a_{ij}B_j\Lambda\phi_j(x_j).
\end{align}
Now, using the uncertainty matching condition~\eqref{u_MC}, we can have
\begin{align}
    \sum_{j=1}^{N}a_{ij}\dot{e}_{ij}&=\sum_{j=1}^{N}a_{ij}A_{Hj}e_{ij}-B_i\Lambda K^{*}_{mi}\sum_{j=1}^{N}a_{ij}(\sigma_i-\sigma_j) \nonumber \\
    &+B_i\Lambda \left(\sum_{j=1}^{N}a_{ij}{{K}_{ij}}\sigma_j(x_j)+K_{mi}\Xi_{i}+\sum_{j=1}^{N}a_{ij}K_{rij}\xi_{ij}+\sum_{j=1}^{N}a_{ij}\Theta_{j}\phi_{j}-\Theta_i\phi_i(x_i)+\Theta^{*}_i\phi_i(x_i)\right) \nonumber \\
    &-\sum_{j=1}^{N}{B_i}\Lambda K_{ij}^{*}\sigma_j-\sum_{j=1}^{N}a_{ij}{B_i}\Lambda K^{*}_{rij}\xi_{ij}- \sum_{j=1}^{N}a_{ij}B_i\Lambda\Theta^*_j\phi_j(x_j).
\end{align}
grouping according to $B_i$,
\begin{align}
    &\sum_{j=1}^{N}a_{ij}\dot{e}_{ij}=\sum_{j=1}^{N}a_{ij}A_{Hj}e_{ij} \nonumber \\
    &+\sum_{j=1}^{N}B_i\Lambda\left(K_{mi}\left(\sigma_i-\sigma_j\right)
    -K^{*}_{mi}\left(\sigma_i-\sigma_j\right)+{{K}_{ij}}\sigma_j(x_j)-K_{ij}^{*}\sigma_j+K_{rij}\xi_{ij}
    -K^{*}_{rij}\xi_{ij}+\Theta_{j}\phi_{j}\right. \nonumber \\
    &\left.-\Theta^*_j\phi_j(x_j)-\Theta_i\phi_i(x_i)+\Theta^{*}_i\phi_i(x_i)\right).
\end{align}

Likewise, we define the estimation errors $\tilde{K}_{ij}=K_{ij}-K^*_{ij}$, $\tilde{K}_{mi}=K_{mi}-K^*_{mi}$,  $\tilde{K}_{rij}=K_{rij}-K^*_{rij}$, $\tilde{\Theta}_j=\Theta_j-\Theta^*_j$, $\tilde{\Theta}_i=\Theta_i-\Theta^*_i$, the error dynamics can be written as
\begin{align}
    \sum_{j=1}^{N}a_{ij}\dot{e}_{ij}=\sum_{j=1}^{N}a_{ij}A_{Hj}e_{ij}
    +\sum_{j=1}^{N}a_{ij}B_i\Lambda\left(\tilde{K}_{mi}\left(\sigma_i-\sigma_j\right)+\tilde{K}_{ij}\sigma_j(x_j)+\tilde{K}_{rij}\xi_{ij}
    +\tilde{\Theta}_j\phi_{j}(x_j)-\tilde{\Theta}_i\phi_i(x_i)\right).
\end{align}

Now, consider the Lyapunov function~\eqref{lyapunov_nl}. The time derivative is
\begin{align}
    \dot{V}&=\sum_{i=1}^{N}\sum_{j=0}^{N}\dot{e}_{ij}^\top P_ie_{ij}+\sum_{i=1}^{N}\sum_{j=0}^{N}e_{ij}^\top P_i\dot{e}_{ij}
    +2\sum_{i=1}^{N}\sum_{j=1}^{N}a_{ij}\text{tr}\left(\Lambda\tilde{K}^\top_{ij}\Gamma^{-1}_{ij}\dot{\tilde{K}}_{ij}\right) \nonumber \\
    &+2\sum_{i=1}^{N}\text{tr}\left(\Lambda\tilde{K}^\top_{mi}\Gamma^{-1}_{m}\dot{\tilde{K}}_{mi}\right)
    +2\sum_{i=1}^{N}\sum_{j=1}^{N}a_{ij}\text{tr}\left(\Lambda\tilde{K}_{rij}\Gamma^{-1}_{m}\dot{\tilde{K}}_{rij}\right)-2\sum_{i=1}^{N}\text{tr}\left(\Lambda\tilde{\Theta}_i\Gamma^{-1}_{\theta}\dot{\tilde{\Theta}}_{i}\right) \nonumber \\
    &+2\sum_{j=1}^{N}\text{tr}\left(\Lambda\tilde{\Theta}_j\Gamma^{-1}_{\phi}\dot{\tilde{\Theta}}_{j}\right),
    \label{lyapdot}
\end{align}
which expanded through the definition of the error dynamics, is
\begin{align}
    \dot{V}&=\sum_{i=1}^{N}\left(\sum_{j=0}^{N}a_{ij}A_{Hj}e_{ij}+\sum_{j=0}^{N}a_{ij}B_i\Lambda\left(\tilde{K}_{mi}\left(\sigma_i-\sigma_j\right)+\tilde{K}_{ij}\sigma_j(x_j)+\tilde{K}_{rij}\xi_{ij}+\tilde{\Theta}_j\phi_{j}(x_j)-\tilde{\Theta}_i\phi_i(x_i)\right)\right)^\top P_ie_{ij} \nonumber \\
    &+\sum_{i=1}^{N}e_{ij}^\top P_i\left(\sum_{j=0}^{N}a_{ij}A_{Hj}e_{ij}+\sum_{j=0}^{N}a_{ij}B_i\Lambda\left(\tilde{K}_{mi}\left(\sigma_i-\sigma_j\right)+\tilde{K}_{ij}\sigma_j(x_j)+\tilde{K}_{rij}\xi_{ij}
    +\tilde{\Theta}_j\phi_{j}(x_j)-\tilde{\Theta}_i\phi_i(x_i)\right)\right) \nonumber \\
    &+2\sum_{i=1}^{N}\sum_{j=1}^{N}a_{ij}\text{tr}\left(\Lambda\tilde{K}^\top_{ij}\Gamma^{-1}_{ij}\dot{\tilde{K}}_{ij}\right)+2\sum_{i=1}^{N}\text{tr}\left(\Lambda\tilde{K}^\top_{mi}\Gamma^{-1}_{m}\dot{\tilde{K}}_{mi}\right)+2\sum_{i=1}^{N}\sum_{j=1}^{N}\text{tr}\left(\Lambda\tilde{K}_{rij}\Gamma^{-1}_{m}\dot{\tilde{K}}_{rij}\right) \nonumber \\
    &-2\sum_{i=1}^{N}\text{tr}\left(\Lambda\tilde{\Theta}_i\Gamma^{-1}_{\theta}\dot{\tilde{\Theta}}_{i}\right)+2\sum_{j=1}^{N}\text{tr}\left(\Lambda\tilde{\Theta}_j\Gamma^{-1}_{\phi}\dot{\tilde{\Theta}}_{j}\right),
    \label{dlyapg2ia}
\end{align}
grouping the terms,
\begin{align}
    \dot{V}&=\sum_{i=1}^{N}\sum_{j=0}^{N}a_{ij}\left(e^\top_{ij}A^\top_{Hj}P_ie_{ij}+e^\top_{ij}P_iA_{Hj}e_{ij}
    +2\left[e^\top_{ij}P_iB_i\Lambda\tilde{K}_{mi}(\sigma_i-\sigma_j)+\text{tr}\left(\Lambda\tilde{K}_{mi}\Gamma^{-1}_{m}\dot{\tilde{K}}^\top_{mi}\right)\right]
    \right. \nonumber \\
    &\left.+2\left[e^\top_{ij}P_iB_i\Lambda\tilde{K}_{rij}\xi_{ij}+\text{tr}\left(\Lambda\tilde{K}_{rij}\Gamma^{-1}_{r}\dot{\tilde{K}}^\top_{rij}\right)\right]+2\left[e^\top_{ij}P_iB_i\Lambda\tilde{K}_{ij}\sigma_j+\text{tr}\left(\Lambda\tilde{K}_{ij}\Gamma^{-1}_{r}\dot{\tilde{K}}^\top_{ij}\right)\right]\right. \nonumber \\
    &\left.-2\left[e^\top_{ij}P_iB_i\Lambda\tilde{\Theta}_{i}\phi_i(x_i)+\text{tr}\left(\Lambda\tilde{\Theta}_i\Gamma ^{-1}_\theta\dot{\tilde{\Theta}}^\top_i\right)\right]+2\left[e^\top_{ij}P_iB_i\Lambda\tilde{\Theta}_{j}\phi_j(x_j)+\text{tr}\left(\Lambda\tilde{\Theta}_j\Gamma ^{-1}_\phi\dot{\tilde{\Theta}}^\top_j\right)\right]\right),
    \label{dlyapg2ib}
\end{align}
considering as well the trace property of tr$(CD^\top)=D^\top C$, with $C,D\in\mathbb{R}^n$, and the definition of $P_i$, it follows that

\begin{align*}
    &\dot{V}=\sum_{i=1}^{N}\sum_{j=0}^{N}a_{ij}\left(-e^\top_{ij}Q_ie_{ij}
    +2\text{tr}\left(\Lambda\tilde{K}_{mi}(\sigma_i-\sigma_j)e^\top_{ij}P_iB_i+\Lambda\tilde{K}_{mi}\Gamma^{-1}_{m}\dot{\tilde{K}}^\top_{mi}\right) \right.\\
    &\left.+2\text{tr}\left(\Lambda\tilde{K}_{rij}\xi_{ij}e^\top_{ij}P_iB_i+\Lambda\tilde{K}_{rij}\Gamma^{-1}_{r}\dot{\tilde{K}}^\top_{rij}\right)+2\text{tr}\left(\Lambda\tilde{K}_{ij}\sigma_je^\top_{ij}P_iB_i+\Lambda\tilde{K}_{ij}\Gamma^{-1}_{ij}\dot{\tilde{K}}^\top_{ij}\right)\right. \nonumber \\ 
    &\left.-2\text{tr}\left(\Lambda\tilde{\Theta}_i\phi_i(x_i)e^\top_{ij}P_iB_i+\Lambda\tilde{\Theta}_i\Gamma ^{-1}_\theta\dot{\tilde{\Theta}}^\top_i\right)
    +2\text{tr}\left(\Lambda\tilde{\Theta}_j\phi_j(x_j)e^\top_{ij}P_iB_i+\Lambda\tilde{\Theta}_j\Gamma ^{-1}_\phi\dot{\tilde{\Theta}}^\top_j\right)\right),
\end{align*}
factorizing $\Lambda$, we can obtain,
\begin{align*}
    &\dot{V}=\sum_{i=1}^{N}\sum_{j=0}^{N}a_{ij}\left(-e^\top_{ij}Q_ie_{ij}
    +2\Lambda\text{tr}\left(\tilde{K}_{mi}(\sigma_i-\sigma_j)e^\top_{ij}P_iB_i+\tilde{K}_{mi}\Gamma^{-1}_{m}\dot{\tilde{K}}^\top_{mi}\right)\right. \\
    &\left.+2\Lambda\text{tr}\left(\tilde{K}_{rij}\xi_{ij}e^\top_{ij}P_iB_i+\tilde{K}_{rij}\Gamma^{-1}_{r}\dot{\tilde{K}}^\top_{rij}\right)+2\Lambda\text{tr}\left(\tilde{K}_{ij}\sigma_je^\top_{ij}P_iB_i+\tilde{K}_{ij}\Gamma^{-1}_{ij}\dot{\tilde{K}}^\top_{ij}\right) \right. \\
    &\left.-2\Lambda\text{tr}\left(\tilde{\Theta}_i\phi_i(x_i)e^\top_{ij}P_iB_i+\tilde{\Theta}_i\Gamma ^{-1}_\theta\dot{\tilde{\Theta}}^\top_i\right)
    +2\Lambda\text{tr}\left(\tilde{\Theta}_j\phi_j(x_j)e^\top_{ij}P_iB_i+\tilde{\Theta}_j\Gamma ^{-1}_\phi\dot{\tilde{\Theta}}^\top_j\right)\right),
\end{align*}
grouping in terms of the estimators $\tilde{K}_{mi},\tilde{K}_{rij},\tilde{K}_{ij},\tilde{\Theta}_i$,$\tilde{\Theta}_j$, we have
\begin{align*}
    &\dot{V}=\sum_{i=1}^{N}\sum_{j=0}^{N}a_{ij}\left(-e^\top_{ij}Q_ie_{ij}
    +2\Lambda\text{tr}\left(\tilde{K}_{mi}\left((\sigma_i-\sigma_j)e^\top_{ij}P_iB_i+\Gamma^{-1}_{m}\dot{\tilde{K}}^\top_{mi}\right)\right)\right. \\
    &\left.+2\Lambda\text{tr}\left(\tilde{K}_{rij}\left(\xi_{ij}e^\top_{ij}P_iB_i+\Gamma^{-1}_{r}\dot{\tilde{K}}^\top_{rij}\right)\right)+2\Lambda\text{tr}\left(\tilde{K}_{ij}\left(\sigma_je^\top_{ij}P_iB_i+\Gamma^{-1}_{ij}\dot{\tilde{K}}^\top_{ij}\right)\right) \right. \\
    &\left.-2\Lambda\text{tr}\left(\tilde{\Theta}_i\left(\phi_i(x_i)e^\top_{ij}P_iB_i+\Gamma ^{-1}_\theta\dot{\tilde{\Theta}}^\top_i\right)\right)
    +2\Lambda\text{tr}\left(\tilde{\Theta}_j\left(\phi_j(x_j)e^\top_{ij}P_iB_i+\Gamma ^{-1}_\phi\dot{\tilde{\Theta}}^\top_j\right)\right)\right),
\end{align*}
and opening with the adaptive laws~\eqref{al}, we have
\begin{align*}
    \dot{V}&=\sum_{i=1}^{N}\left(-\sum_{j=0}^{N}a_{ij}e^\top_{ij}Q_ie_{ij}
    +2\Lambda\sum_{j=0}^{N}a_{ij}\text{tr}\left(\tilde{K}_{mi}\left((\sigma_i-\sigma_j)e^\top_{ij}P_iB_i-\sum_{\hat{j}=0}^{N}\left(\sigma_i-\sigma_{\hat{j}}\right)e^\top_{i\hat{j}}P_i{B_i}\right)\right)\right.\\
    &+2\Lambda\sum_{j=0}^{N}a_{ij}\text{tr}\left(\tilde{K}_{rij}\left(\xi_{ij}e^\top_{ij}P_iB_i-\xi_{ij} e^\top_{ij}P_i {B_i}\right)\right) 
    +2\Lambda\sum_{j=0}^{N}a_{ij}\text{tr}\left(\tilde{K}_{ij}\left(\sigma_je^\top_{ij}P_iB_i-\sigma_j(x_j)e^\top_{ij}P_i {B_i}\right)\right) \\
    &\left.-2\Lambda\sum_{j=0}^{N}a_{ij}\text{tr}\left(\tilde{\Theta}_i\left(\phi_i(x_i)e^\top_{ij}P_iB_i-\phi_j(x_j) e^\top_{ij}P_i {B_i}\right)\right)
    +2\Lambda\sum_{j=0}^{N}a_{ij}\text{tr}\left(\tilde{\Theta}_j\left(\phi_j(x_j)e^\top_{ij}P_iB_i-\phi_i(x_i) e^\top_{ij}P_i {B_i}\right)\right)\right),
\end{align*}

Because $K_{mi},K_{rij},K_{ij},\Theta_i,\Theta_j$ are constants, therefore $\dot{\tilde{K}}_{mi}=\dot{K}_{mi}$, $\dot{\tilde{K}}_{rij}=\dot{K}_{rij}$, $\dot{\tilde{K}}_{ij}=\dot{K}_{ij}$, $\dot{\tilde{\Theta}}_i=\dot{\Theta}_i$ and $\dot{\tilde{\Theta}}_j=\dot{\Theta}_j$, then we can reduce to
\begin{align*}
    \dot{V}=\sum_{i=1}^{N}\sum_{j=0}^{N}a_{ij}(-e^\top_{ij}Q_ie_{ij})
    \leq\sum_{i=1}^{N}-\lambda_{min}(Q)\sum_{j=0}^{N}a_{ij}\norm{e_{ij}}^2\leq 0,
\end{align*}
where using Barbalat's lemma~\cite{farkas2016variations} and with definition~\ref{uub}, the synchronization error is UUB with \eqref{lyapunov_nl} as a valid Lyapunov function.

\end{proof}

We can now state the main stability result of this synchronization problem in the following theorem. \\ 

\begin{theorem} \label{thm1}
Let Assumptions \ref{assum:graphs}-\ref{asuum:one_agent} hold. The dynamics generated by the set of agents in~\eqref{nonlinear-system-imu}, with control law~\eqref{u1_nl_imu2} for the leader agent, and control law~\eqref{u_th3} for the followers, guarantee UUB for all initial conditions in the synchronization, i.e., $\lim_{t \to \infty} \norm{e_{ij}(t)}=0$ and $\lim_{t \to \infty} \norm{e_{1}(t)}=0$, with $\norm{x_j(t)}<M_{xj}$ $\forall t \in [0, T]$ for a constant $M_{xj} >0$.
\end{theorem}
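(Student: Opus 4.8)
The plan is to stitch together Proposition~\ref{prop:central} (which handles the leader, Agent~$1$) and Lemma~\ref{lem_nl} (which handles the followers) into a single composite Lyapunov certificate, and then exploit the acyclic, spanning-tree structure of $\mathcal{G}$ to turn convergence of the neighbor errors into boundedness and convergence of every agent's state.

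First I would record that, since $A_m$ is Hurwitz and the reference input $u_m=\pi(x)$ is bounded along the (bounded) reference trajectory, $x_m$ is bounded. Then I would take the composite function $V = V_L + V_F$, where $V_L$ is the leader Lyapunov function~\eqref{lyapunov_1} in the variables $(e_1,\tilde{K}_{m1},\tilde{K}_{r1},\tilde{\Theta}_1)$ and $V_F$ is the network Lyapunov function~\eqref{lyapunov_nl} in the follower errors $\{e_{ij}\}$ and the associated adaptive-parameter errors; $V$ is positive definite and radially unbounded in the joint error coordinates.

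Second, differentiating $V$ along the closed loop generated by~\eqref{u1_nl_imu2} on Agent~$1$ and~\eqref{u_th3} on the followers, the two computations already carried out in the proofs of Proposition~\ref{prop:central} and Lemma~\ref{lem_nl} give $\dot{V} = \dot{V}_L + \dot{V}_F \le -\lambda_{\min}(Q)\norm{e_1}^2 - \sum_{i=1}^N \lambda_{\min}(Q_i)\sum_{j=1}^N a_{ij}\norm{e_{ij}}^2 \le 0$. Hence $V$ is non-increasing, so $e_1$, all $e_{ij}$, and all adaptive-parameter errors are bounded.

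The hard part will be closing the boundedness loop at the state level, so that $\sigma_i$, $\phi_i$ and the augmented signals $\xi_1,\Xi_i,\xi_{ij}$ — and therefore $\dot{e}_1$ and $\dot{e}_{ij}$ — are bounded. For this I would induct on a topological ordering of $\mathcal{G}$, which exists by Assumption~\ref{assum:graphs} and places Agent~$1$ first by Assumption~\ref{asuum:one_agent}: the base case is $x_1 = x_m + e_1$, bounded; for $i\neq 1$, pick an in-neighbor $j$ earlier in the order, so $x_i = x_j + e_{ij}$ is bounded, which yields $\norm{x_i(t)} < M_{xi}$ on $[0,T]$ for a finite constant $M_{xi}>0$. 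With all signals bounded, $\dot{V}$ is uniformly continuous; since $V\ge 0$ is non-increasing it has a limit, so $\int_0^\infty \big(\lambda_{\min}(Q)\norm{e_1}^2 + \sum_{i,j} a_{ij}\lambda_{\min}(Q_i)\norm{e_{ij}}^2\big)\,dt < \infty$, and Barbalat's lemma~\cite{farkas2016variations} gives $\norm{e_1(t)}\to 0$ and $\norm{e_{ij}(t)}\to 0$. This is exactly the stated (uniformly ultimately bounded, in fact asymptotic) network synchronization, with~\eqref{lyapunov_1}--\eqref{lyapunov_nl} serving as the certificate.
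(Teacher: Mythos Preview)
Your proposal is correct and follows essentially the same route as the paper: invoke the leader and follower Lyapunov analyses (Proposition~\ref{prop:central} and Lemma~\ref{lem_nl}), propagate state boundedness through the graph via $x_i = x_j + e_{ij}$, and close with Barbalat's lemma. Your version is in fact more careful than the paper's, since you explicitly form the composite $V = V_L + V_F$ and spell out the topological-order induction that Assumptions~\ref{assum:graphs} and~\ref{asuum:one_agent} afford, whereas the paper treats the boundedness chain in a single line and checks uniform continuity of $\dot{V}$ by bounding $\ddot{V}$ directly.
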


\begin{proof}
From the hypothesis we know that the reference signal $x_m(t)$ are bounded, from Lemma~\ref{lem1} it follows that the synchronization error $e_{ij}$ and constants $K_{mi},K_{ij},K_{rij},\Theta_i,\Theta_j$ are UUB. The dynamics of the reference and the states are then also bounded, i.e., $x_j,\dot{x}_j,x_m,\dot{x}_m$ are bounded. Thus, $x_i(t)=e_{ij}+x_j(t)$ is UUB, and at the same time, it implies that $u_i(t)$ is bounded as well as $\dot{x}_i$ and $\dot{e}_{ij}$. To ensure uniform continuity of the Lyapunov function derivative \eqref{lyapdot}, its second derivative is
\begin{equation*}
    \Ddot{V}=-2\sum_{i=1}^{N}\sum_{j=0}^{N}a_{ij}e^\top_{ij}Q_ie_{ij},
\end{equation*}
and is bounded because $V(t)\geq 0$ and $\dot{V}(t)\leq 0$. Thus, from Barbalat's Lemma, we have that $\lim_{t\to\infty}\dot{V}(t)=0$. Therefore, we can conclude that  $\lim_{t\to\infty}\norm{e_{ij}(t)}$ is UUB.
\end{proof}

In the case of linear agents with $\omega_i=0$, the distributed control law used for synchronizing agents that do not communicate with the reference is
\begin{equation}
    u_i=\sum_{j=1}^{N}a_{ij}{{K}_{ij}}x_j+K_{mi}\Xi_{i}+\sum_{j=1}^{N}a_{ij}K_{rij}\xi_{ij},
    \label{u_th}
\end{equation}

Similarly as in~\eqref{eq6}, for follower agents $i \in [2,\hdots,N]$. Then, the following corollary presents the stability result for this distributed case. \\

\begin{corollary} \label{lem1}
Let Assumptions \ref{assum:graphs}-\ref{asuum:one_agent} hold. Consider a linear system $\dot{x}_i=A_ix_i+B_iu_i$ with a reference system \eqref{nl_ref}, and employing the control and adaptive laws \eqref{u_th}--\eqref{al}. Then, the function 
\begin{equation}
    V =\sum_{i=1}^{N}\sum_{j=0}^{N}a_{ij}e_{ij}^\top P_ie_{ij}+ \sum_{i=1}^{N} \text{tr}\left (\Lambda\tilde{K}_{mi}\Gamma_{m}\tilde{K}^\top_{mi} \right )+ \sum_{i=1}^{N} \sum_{j=1}^{N}a_{ij} \text{tr}\left (\Lambda{\tilde{K}_{ij}}\Gamma_{ij}\tilde{K}^\top_{ij} \right ) + \sum_{i=1}^{N}\sum_{j=1}^{N}a_{ij}\text{tr}\left (\Lambda{\tilde{K}_{rij}}\Gamma_{r}\tilde{K}^\top_{rij} \right )
    \label{u_thxxx}
\end{equation}
is a valid Lyapunov function.
\end{corollary}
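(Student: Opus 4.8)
The plan is to obtain this corollary as the linear, disturbance-free specialization of Lemma~\ref{lem_nl}: one sets $w_i\equiv 0$, so that every term carrying a basis function $\phi_i$ or an adaptive gain $\Theta_i,\Theta_j$ vanishes (and Assumption~\ref{assum:uncertainties} is never invoked), and replaces the canonical map $\sigma_i(x_i)$ by $x_i$ throughout, since each agent now obeys the linear dynamics $\dot x_i=A_ix_i+B_i\Lambda u_i$. First I would write the edge-error dynamics $\dot e_{ij}=\dot x_i-\dot x_j$ and aggregate over the in-neighbors of agent $i$ to form $\sum_{j=1}^N a_{ij}\dot e_{ij}$, then substitute the distributed control law~\eqref{u_th}.

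Second I would eliminate the unknown matrices exactly as in the proof of Lemma~\ref{lem_nl}: apply the coupling matching conditions~\eqref{MC} to rewrite $A_j,B_j$ in terms of $A_i,B_i,\Lambda$ and the ideal gains $K_{ij}^*,K_{rij}^*$; apply the feedback matching conditions~\eqref{fmc} to rewrite $A_i=A_m-B_i\Lambda K_{mi}^*$; and use the identity $A_m\sum_j a_{ij}(x_i-x_j)=M\sum_j a_{ij}e_{ij}+\sum_j a_{ij}H_jZ_r^{j\top}(x_i-x_j)$ that follows from the structure of $M$ in~\eqref{eq_m}, together with $\sum_j a_{ij}A_{Hj}=M+\sum_j a_{ij}H_j\Upsilon_r^{j\top}$. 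Collecting terms and introducing the parameter errors $\tilde K_{ij}=K_{ij}-K_{ij}^*$, $\tilde K_{mi}=K_{mi}-K_{mi}^*$, $\tilde K_{rij}=K_{rij}-K_{rij}^*$ leaves the closed-loop error equation
\begin{equation*}
\sum_{j=1}^N a_{ij}\dot e_{ij}=\sum_{j=1}^N a_{ij}A_{Hj}e_{ij}+\sum_{j=1}^N a_{ij}B_i\Lambda\left(\tilde K_{mi}(x_i-x_j)+\tilde K_{ij}x_j+\tilde K_{rij}\xi_{ij}\right),
\end{equation*}
that is, the error dynamics of Lemma~\ref{lem_nl} with the $\tilde\Theta$ contributions deleted.

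Third I would differentiate the candidate $V$ of~\eqref{u_thxxx} along these trajectories. The quadratic-in-$e_{ij}$ part collapses through the linear Lyapunov equations~\eqref{eq6_2}, $P_iA_{Hj}+A_{Hj}^\top P_i=-Q_i$, to $-\sum_i\sum_j a_{ij}e_{ij}^\top Q_ie_{ij}$; the remaining cross terms are matched with the trace terms of $\dot V$ using $\mathrm{tr}(CD^\top)=D^\top C$, the constant factor $\Lambda$ is pulled out, and the adaptive laws~\eqref{al} are substituted so that each of the $\tilde K_{ij}$, $\tilde K_{mi}$, $\tilde K_{rij}$ groups cancels identically. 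Since $V\ge 0$ and
\begin{equation*}
\dot V=-\sum_{i=1}^N\sum_{j=1}^N a_{ij}e_{ij}^\top Q_ie_{ij}\le-\lambda_{\min}(Q)\sum_{i=1}^N\sum_{j=1}^N a_{ij}\norm{e_{ij}}^2\le 0,
\end{equation*}
$V$ qualifies as a valid Lyapunov function; then boundedness of $x_m$, uniform continuity of $\dot V$ (its derivative being a bounded form in the bounded errors), and Barbalat's lemma~\cite{farkas2016variations} give UUB of the synchronization errors $e_{ij}$ in the sense of Definition~\ref{uub}, together with boundedness of the adaptive gains.

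I expect the only delicate point to be the summation bookkeeping that produces the cancellation: the adaptive law for $K_{mi}$ is driven by the aggregated regressor $\Xi_i=\sum_j a_{ij}(x_i-x_j)$, while the error equation exposes the per-edge differences $(x_i-x_j)$, so the double sum $\sum_i\sum_j a_{ij}$ must be kept outside and one must verify that $\sum_j a_{ij}\,\tilde K_{mi}(x_i-x_j)e_{ij}^\top P_iB_i$ is precisely what $\dot{\tilde K}_{mi}=\dot K_{mi}$ annihilates after the trace identity is applied; the analogous telescoping for the $\tilde K_{ij}x_j$ and $\tilde K_{rij}\xi_{ij}$ terms is routine. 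Everything else is a strict simplification of the already-proved Lemma~\ref{lem_nl}, so no new ideas are required.
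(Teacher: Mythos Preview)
Your proposal is correct and follows essentially the same approach as the paper: both treat the corollary as the linear, disturbance-free specialization of Lemma~\ref{lem_nl}, arriving at the identical reduced error dynamics $\sum_j a_{ij}\dot e_{ij}=\sum_j a_{ij}A_{Hj}e_{ij}+\sum_j a_{ij}B_i\Lambda(\tilde K_{mi}(x_i-x_j)+\tilde K_{ij}x_j+\tilde K_{rij}\xi_{ij})$ and then invoking the Lyapunov computation of that lemma. Your write-up is in fact more detailed than the paper's own proof, which merely states the reduced error equation and refers back to Lemma~\ref{lem_nl} and Theorem~\ref{thm1}.
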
 

\begin{proof}
It follows the same procedure as Lemma~\ref{lem_nl} with the error dynamics obtained as

By the definition of $\xi_{ij}$ to expand $u_j$, we obtain
\begin{align*}
    \sum_{j=1}^{N}a_{ij}\dot{e}_{ij}=\sum_{j=0}^{N}a_{ij}A_{Hj}e_{ij}
    +\sum_{j=1}^{N}a_{ij}B_i\Lambda(\tilde{K}_{mi}\left(x_i-x_j\right)+\tilde{K}_{ij}x_j+\tilde{K}_{rij}\xi_{ij}).
\end{align*}

to implies that the synchronization error is bounded by~\eqref{u_thxxx}, in a procedure similar to that of Lemma~\ref{lem_nl} and Theorem~\ref{thm1}.
\end{proof}

From this analysis, it is possible to prove that synchronization error is uniformly bounded. With this information, we can state the case with input magnitude saturation for the previous development techniques in the next section.

\section{Input Magnitude Saturation Adaptive Control with Reinforcement Learning} \label{MSAC}
This section presents the main result of the work as an additional case with the heterogeneous synchronization of agents with uncertainties and in the presence of input saturation. The input saturation for an agent connected directly with a reference is handled as
\begin{equation}
    u_{i,sat}(t)=u_{\text{max}}\text{sat}\left(\frac{u_{i}(t)}{u_{i,\text{max}}}\right),
    \label{msac_constraint}
\end{equation}
where the MRAC-RL controller output is $u_{i}(t)$. The saturation function \( \text{sat}(x) \) limits the value of \( x \) to lie within a specified range, such that 
\[
\text{sat}(x) = 
\begin{cases} 
\text{max\_val} & \text{if } x > \text{max\_val} \\
x & \text{if } \text{min\_val} \le x \le \text{max\_val} \\
\text{min\_val} & \text{if } x < \text{min\_val}
\end{cases}
\]
where \( \text{min\_val} \) and \( \text{max\_val} \) are the lower and upper bounds, respectively. This saturation may incur a disturbance in the controller action, defined as
\begin{equation}
    \Delta u_i(t) =u_{i}(t)-u_{i,sat}(t).
\end{equation}

It is easy to see that $\Delta u_i(t) = 0$ when the desired control $u_{i,ac}(t)$ does not saturate, which analytically leads to the definition of a performance error $e_{pij}$ whose dynamics is represented as
\begin{equation}
    \sum_{j=1}^{N}a_{ij}\dot{e}_{pij}=\sum_{j=0}^{N}a_{ij}A_{Hj}e_{pij}+\sum_{j=1}^{N}a_{ij}B_iK^\top_{pi}\Delta u_i,
\end{equation}
We introduce then a new performance error $e_{uij}=e_{ij}-e_{pij}$, which consider the disturbance presented by the variation $\Delta u_i(t)$ as
\begin{align}
    \sum_{j=1}^{N}a_{ij}\dot{e}_{uij}&=\sum_{j=0}^{N}a_{ij}A_{Hj}e_{ij}
    +\sum_{j=1}^{N}a_{ij}B_i\Lambda(\tilde{K}_{mi}\left(\sigma_i-\sigma_j\right)+\tilde{K}_{ij}\sigma_j(x_j)+\tilde{K}_{rij}\xi_{ij}
    +\tilde{\Theta}_j\phi_{j}(x_j)-\tilde{\Theta}_i\phi_i(x_i)) \nonumber \\
    &-\sum_{j=0}^{N}a_{ij}A_{Hj}e_{pij}-\sum_{j=1}^{N}a_{ij}B_iK^\top_{pi}\Delta u_i,
\end{align}
grouping by $A_{Hj}$, we have
\begin{align}
    \sum_{j=1}^{N}a_{ij}\dot{e}_{uij}&=\sum_{j=1}^{N}a_{ij}A_{Hj}(e_{ij}-e_{pij})
    +\sum_{j=1}^{N}a_{ij}B_i\Lambda(\tilde{K}_{mi}\left(\sigma_i-\sigma_j\right)+\tilde{K}_{ij}\sigma_j(x_j)+\tilde{K}_{rij}\xi_{ij}
    +\tilde{\Theta}_j\phi_{j}(x_j) \nonumber \\
    &-\tilde{\Theta}_i\phi_i(x_i)-K_{pi}\Delta u_i).
    \label{msac_error}
\end{align}

This suggests a modification to the adaptive laws:
\begin{subequations} \label{al_msac}
    \begin{align}
        {\dot{K}_{ij}}=&-\Gamma_{ij}\sigma_j(x_j)e^\top_{uij}P_i {B_i}, \\
        {\dot{K}_{pi}}=&-\Gamma_{p}\Delta u_ie^\top_{uij}P_i {B_i}, \\
        {\dot{K}_{mi}}=&-\Gamma_m\Xi_{i}e^\top_{uij}P_i{B_i},\\
        \dot{K}_{rij}=&-\Gamma_{r}\xi_{ij} e^\top_{uij}P_i {B_i}, \\
        \dot{\Theta}_{j}=&-\Gamma_{\phi}\phi_j(x_j) e^\top_{uij}P_i {B_i}, \\
        \dot{\Theta}_{i}=&-\Gamma_{\theta}\phi_i(x_i) e^\top_{uij}P_i {B_i},
    \end{align}
\end{subequations}

in which another positive definite gain matrix $\Gamma_{p}\succ 0$ has been introduced. We can define the following proposition.

\begin{proposition} \label{lem2}
Let Assumptions~\ref{assum:graphs}-\ref{asuum:one_agent} hold. Consider a network of systems~\eqref{nonlinear-system-imu}, control and adaptive laws \eqref{u_th3}--\eqref{al_msac}, the input magnitude constraint~\eqref{msac_constraint} . Then, the synchronization error~\eqref{msac_error} is UUB for all initial conditions
\end{proposition}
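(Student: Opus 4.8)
The plan is to replay the Lyapunov computation of Lemma~\ref{lem_nl}, now driven by the performance error $e_{uij}$ with dynamics~\eqref{msac_error}, and to enlarge the energy function by one block tracking the new saturation gain $K_{pi}$. First I would fix an ideal gain $K_{pi}^*$ matched to the saturation-induced disturbance, set $\tilde{K}_{pi}=K_{pi}-K_{pi}^*$, and take as Lyapunov candidate the function~\eqref{lyapunov_nl} with every $e_{ij}$ replaced by $e_{uij}$ and with the extra term $\sum_{i=1}^{N}\text{tr}\left(\Lambda\tilde{K}_{pi}\Gamma_p^{-1}\tilde{K}_{pi}^\top\right)$ added.

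Next I would differentiate this candidate along~\eqref{msac_error}. Exactly as in the chain of manipulations proving Lemma~\ref{lem_nl}, the quadratic part collapses through the Lyapunov identity~\eqref{eq6_2}, namely $P_iA_{Hj}+A_{Hj}^\top P_i=-Q_i$, while each cross term of the form $e_{uij}^\top P_iB_i\Lambda(\cdot)$ is rewritten as a trace via $\text{tr}(CD^\top)=D^\top C$. Substituting the adaptive laws~\eqref{al_msac} --- which are precisely~\eqref{al} with $e_{ij}$ replaced by $e_{uij}$, together with the new law $\dot{K}_{pi}=-\Gamma_p\Delta u_i e_{uij}^\top P_iB_i$ --- cancels the $\tilde{K}_{mi},\tilde{K}_{ij},\tilde{K}_{rij},\tilde{\Theta}_i,\tilde{\Theta}_j$ and $\tilde{K}_{pi}$ contributions pairwise, leaving
\[
    \dot{V}=-\sum_{i=1}^{N}\sum_{j=1}^{N}a_{ij}\,e_{uij}^\top Q_i e_{uij}\le-\lambda_{\min}(Q)\sum_{i=1}^{N}\sum_{j=1}^{N}a_{ij}\norm{e_{uij}}^2\le 0 .
\]
From $\dot V\le0$ I would deduce boundedness of $e_{uij}$ and of all adaptive gains; then, from boundedness of $x_m$ and the Hurwitz structure of the $A_{Hj}$, that $e_{pij}$, hence $e_{ij}$, $x_i$, $u_i$ and therefore $\Delta u_i$, remain bounded; finally the boundedness of all these signals renders $\ddot V$ bounded, so Barbalat's lemma gives $\dot V\to0$, and Definition~\ref{uub} delivers the claimed UUB of~\eqref{msac_error}, exactly as in Theorem~\ref{thm1}.

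The step I expect to be the real obstacle is the treatment of $\Delta u_i$. Unlike $w_i(x_i)=\Theta_i^{*\top}\phi_i$, the saturation defect $\Delta u_i$ is not produced by a constant ideal parameter, so the $K_{pi}$-loop cancels its effect only up to a residual of the form $e_{uij}^\top P_iB_i\Lambda\tilde{K}_{pi}\Delta u_i$ that need not telescope to zero. Making the argument airtight requires either a precise choice of $K_{pi}^*$ for which the disturbance channel is genuinely matched (so that the clean cancellation above really occurs), or replacing $\dot V\le0$ by a Young-inequality estimate $\dot V\le-\lambda_{\min}(Q)\norm{e_{uij}}^2+c\,\bar\Delta^2$ with $\bar\Delta=\sup_t\norm{\Delta u_i(t)}$, which still yields UUB but with a strictly positive ultimate bound. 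One must also break the apparent circularity ``$u_i$ bounded $\Leftarrow$ $\Delta u_i$ bounded $\Leftarrow$ $u_i$ bounded'', which I would handle by invoking input-to-state stability of the Hurwitz $e_{pij}$-dynamics together with radial unboundedness of $V$, so that boundedness of $V$ propagates to every closed-loop signal before $\Delta u_i$ itself is bounded.
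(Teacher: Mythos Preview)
Your overall strategy---recycle the Lyapunov computation of Lemma~\ref{lem_nl} with $e_{uij}$ in place of $e_{ij}$, cancel the adaptive-gain cross terms via the laws~\eqref{al_msac}, and finish with Barbalat---matches the paper's proof. There is one structural difference: the paper's Lyapunov candidate~\eqref{lyapunov_nl_msac} does \emph{not} include the extra block $\sum_{i}\text{tr}\bigl(\Lambda\tilde{K}_{pi}\Gamma_p^{-1}\tilde{K}_{pi}^\top\bigr)$ that you introduce. The paper substitutes~\eqref{msac_error} into $\dot V$, writes out the displays~\eqref{dlyapg2ia_msac}--\eqref{dlyapg2ib_msac} still containing the $-K_{pi}\Delta u_i$ term, and then asserts, ``taking as reference Lemma~\ref{lem_nl},'' that $e_{uij}$ and the gains $\tilde K_{ij},\tilde K_{mi},\tilde K_{rij},\tilde\Theta_i,\tilde\Theta_j$ are bounded---without isolating or cancelling the $K_{pi}$ contribution and without ever claiming boundedness of $K_{pi}$ itself.

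In other words, the obstacle you flagged---that $\Delta u_i$ is not generated by a constant ideal parameter, so the $K_{pi}$ loop need not telescope cleanly---is not resolved in the paper; it is absorbed into the ``same as Lemma~\ref{lem_nl}'' reference. Your variant, with the additional trace term and the explicit dichotomy between an exact matching choice of $K_{pi}^*$ and a Young-inequality residual $\dot V\le-\lambda_{\min}(Q)\norm{e_{uij}}^2+c\bar\Delta^2$, is closer to how saturation-compensated MRAC is usually argued and is the more honest route. The paper's presentation buys brevity at the cost of precisely the rigor you worried about; your concern about the circular ``$u_i$ bounded $\Leftrightarrow$ $\Delta u_i$ bounded'' chain is likewise a genuine point that the paper does not make explicit.
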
 

\begin{proof}
A Lyapunov function is proposed as
\begin{align}
    V &=\sum_{i=1}^{N}\sum_{j=1}^{N}a_{ij}e_{uij}^\top P_ie_{uij}+ \sum_{i=1}^{N} \text{tr}\left (\Lambda\tilde{K}_{mi}\Gamma_{m}\tilde{K}^\top_{mi} \right )+ \sum_{i=1}^{N} \sum_{j=1}^{N}a_{ij} \text{tr}\left (\Lambda{\tilde{K}_{ij}}\Gamma_{ij}\tilde{K}^\top_{ij} \right ) + \sum_{i=1}^{N}\sum_{j=1}^{N}a_{ij}\text{tr}\left (\Lambda{\tilde{K}_{rij}}\Gamma_{r}\tilde{K}^\top_{rij} \right ) \nonumber \\
    &+ \sum_{i=1}^{N} \sum_{j=1}^{N}a_{ij}\; \text{tr}\left (\Lambda{\tilde{\Theta}_{j}}\Gamma_{\phi}\tilde{\Theta}^\top_{j} \right )+\sum_{i=1}^{N}\text{tr}\left(\Lambda\tilde{\Theta}_i\Gamma^{-1}_{\Theta}\tilde{\Theta}^\top_i\right),
    \label{lyapunov_nl_msac}
\end{align}
The time derivative is
\begin{align}
    \dot{V}&=\sum_{i=1}^{N}\sum_{j=0}^{N}\dot{e}_{uij}^\top P_ie_{uij}+\sum_{i=1}^{N}\sum_{j=0}^{N}e_{ij}^\top P_i\dot{e}_{uij}+2\sum_{i=1}^{N}\sum_{j=1}^{N}a_{ij}\text{tr}\left(\Lambda\tilde{K}^\top_{ij}\Gamma^{-1}_{ij}\dot{\tilde{K}}_{ij}\right)+2\sum_{i=1}^{N}\text{tr}\left(\Lambda\tilde{K}^\top_{mi}\Gamma^{-1}_{m}\dot{\tilde{K}}_{mi}\right) \nonumber \\
    &+2\sum_{i=1}^{N}\sum_{j=1}^{N}a_{ij}\text{tr}\left(\Lambda\tilde{K}_{rij}\Gamma^{-1}_{m}\dot{\tilde{K}}_{rij}\right) \nonumber \\
    &-2\sum_{i=1}^{N}\text{tr}\left(\Lambda\tilde{\Theta}_i\Gamma^{-1}_{\theta}\dot{\tilde{\Theta}}_{i}\right)+2\sum_{j=1}^{N}\text{tr}\left(\Lambda\tilde{\Theta}_j\Gamma^{-1}_{\phi}\dot{\tilde{\Theta}}_{j}\right)
    \label{lyapdot_msac}
\end{align}
which expanded through the definition of the error dynamics, is
\begin{align}
    \dot{V}&=\sum_{i=1}^{N}\left(\sum_{j=0}^{N}a_{ij}A_{Hj}(e_{ij}-e_{pij})+\sum_{j=0}^{N}a_{ij}B_i\Lambda(\tilde{K}_{mi}\left(\sigma_i-\sigma_j\right)+\tilde{K}_{ij}\sigma_j(x_j)+\tilde{K}_{rij}\xi_{ij}+\tilde{\Theta}_j\phi_{j}(x_j)\right. \nonumber \\
    &\left.-\tilde{\Theta}_i\phi_i(x_i)-K_{pi}\Delta u_i)\right)^\top P_ie_{uij} +\sum_{i=1}^{N}e_{uij}^\top P_i\left(\sum_{j=0}^{N}a_{ij}A_{Hj}(e_{ij}-e_{pij})+\sum_{j=0}^{N}a_{ij}B_i\Lambda(\tilde{K}_{mi}\left(\sigma_i-\sigma_j\right) \right. \nonumber \\
    &\left.+\tilde{K}_{ij}\sigma_j(x_j)+\tilde{K}_{rij}\xi_{ij}+\tilde{\Theta}_j\phi_{j}(x_j)-\tilde{\Theta}_i\phi_i(x_i)-K_{pi}\Delta u_i)\right)+2\sum_{i=1}^{N}\sum_{j=1}^{N}a_{ij}\text{tr}\left(\Lambda\tilde{K}^\top_{ij}\Gamma^{-1}_{ij}\dot{\tilde{K}}_{ij}\right) \nonumber \\
    &+2\sum_{i=1}^{N}\text{tr}\left(\Lambda\tilde{K}^\top_{mi}\Gamma^{-1}_{m}\dot{\tilde{K}}_{mi}\right)+2\sum_{i=1}^{N}\sum_{j=1}^{N}\text{tr}\left(\Lambda\tilde{K}_{rij}\Gamma^{-1}_{m}\dot{\tilde{K}}_{rij}\right)-2\sum_{i=1}^{N}\text{tr}\left(\Lambda\tilde{\Theta}_i\Gamma^{-1}_{\theta}\dot{\tilde{\Theta}}_{i}\right)+2\sum_{j=1}^{N}\text{tr}\left(\Lambda\tilde{\Theta}_j\Gamma^{-1}_{\phi}\dot{\tilde{\Theta}}_{j}\right)
    \label{dlyapg2ia_msac}
\end{align}
grouping the terms, and with the definition of $e_{uij}$, then we have
\begin{align}
    \dot{V}&=\sum_{i=1}^{N}\left(\sum_{j=0}^{N}a_{ij}A_{Hj}e_{uij}+\sum_{j=0}^{N}a_{ij}B_i\Lambda(\tilde{K}_{mi}\left(\sigma_i-\sigma_j\right)+\tilde{K}_{ij}\sigma_j(x_j)+\tilde{K}_{rij}\xi_{ij}+\tilde{\Theta}_j\phi_{j}(x_j)\right. \nonumber \\
    &\left.-\tilde{\Theta}_i\phi_i(x_i)-K_{pi}\Delta u_i)\right)^\top P_ie_{uij}+\sum_{i=1}^{N}e_{uij}^\top P_i\left(\sum_{j=0}^{N}a_{ij}A_{Hj}e_{uij}+\sum_{j=0}^{N}a_{ij}B_i\Lambda(\tilde{K}_{mi}\left(\sigma_i-\sigma_j\right)\right. \nonumber \\
    &\left.+\tilde{K}_{ij}\sigma_j(x_j)+\tilde{K}_{rij}\xi_{ij}+\tilde{\Theta}_j\phi_{j}(x_j)-\tilde{\Theta}_i\phi_i(x_i)-K_{pi}\Delta u_i)\right)+2\sum_{i=1}^{N}\sum_{j=1}^{N}a_{ij}\text{tr}\left(\Lambda\tilde{K}^\top_{ij}\Gamma^{-1}_{ij}\dot{\tilde{K}}_{ij}\right) \nonumber \\
    &+2\sum_{i=1}^{N}\text{tr}\left(\Lambda\tilde{K}^\top_{mi}\Gamma^{-1}_{m}\dot{\tilde{K}}_{mi}\right)+2\sum_{i=1}^{N}\sum_{j=1}^{N}\text{tr}\left(\Lambda\tilde{K}_{rij}\Gamma^{-1}_{m}\dot{\tilde{K}}_{rij}\right)-2\sum_{i=1}^{N}\text{tr}\left(\Lambda\tilde{\Theta}_i\Gamma^{-1}_{\theta}\dot{\tilde{\Theta}}_{i}\right) \nonumber \\
    &+2\sum_{j=1}^{N}\text{tr}\left(\Lambda\tilde{\Theta}_j\Gamma^{-1}_{\phi}\dot{\tilde{\Theta}}_{j}\right).
    \label{dlyapg2ib_msac}
\end{align}
Taking as reference the Lemma~\eqref{lem_nl}, it can be concluded that $e_{uij},\tilde{K}_{ij},\tilde{K}_{mi},\tilde{K}_{rij},\tilde{\Theta}_i,\tilde{\Theta}_j$ are bounded. Since all controller parameters are bounded, a bounded input to the reference model implies that the states $x_i$ are bounded. Therefore, synchronization errors $e_{uij}$ are bounded. Thus, in a similar fashion to the proof of Theorem~\ref{thm1} from Barbalat's Lemma, we have that $\lim_{t\to\infty}\dot{V}(t)=0$. Therefore, we can conclude that  $\lim_{t\to\infty}\norm{e_{ij}(t)}=0$ the synchronization error tends to zero globally, asymptotically, and uniformly.
\end{proof}

Proposition~\ref {lem2} allows the heterogeneous synchronization of agents to improve the performance of reinforcement learning techniques through adaptive techniques in scenarios of heterogeneity, uncertainty, and saturation. With this information, we present the simulation results obtained in the next section.

\section{Numerical Analysis} \label{sims}
In this section two cases of experimental analysis are presented. Initially, a network pendulum model for the validation of adaptive control algorithms with reinforcement learning. Following this, the validation of the algorithms for saturation management is presented.

\subsection{Network of pendulum systems for validation of adaptive control}

Consider the following nonlinear model of an inverted pendulum
\begin{equation}
    ml^2\ddot{\theta}=mgl\sin\theta-b\dot{\theta}+\tau,
    \label{nonlinear-model}
\end{equation}
where $m$ is the pendulum mass, $g$ is the gravitational constant, $l$ is the length pendulum, and $\tau$ is the force provided to the system. The goal is to maintain a non-zero set-point for the states $\theta,\dot{\theta}$. For consistency with the rest of the paper, we denote $x=[\theta,\dot{\theta}]$. We use an off-the-shelf \textit{Deep Deterministic Policy Gradient Agent} pre-trained policy from MATLAB\textregistered. This policy was trained to swing up and balance an inverted pendulum. Training process details can be found in Mathworks
~\cite{Mathworks}. 

Initially, we present the results of the systems implementing only the RL algorithm and compare them with the distributed MRAC strategy. Moreover, we compare the cases with and without input-matched uncertainties. For the training procedure, the system parameters $m=l=1$, $b=0$, and $g=9.81$ are selected. \\

The communications graph used for the test network is shown in Figure~\ref{fig:graph}. The response of the network, using only the reinforcement learning strategy in all the agents, is observed in Figure~\ref{fig:system1}. As expected, with no input-matched uncertainty and homogeneous agents identical to the reference model, the RL-trained policy stabilizes the network of agents. We are showing the trajectories of all agents, but the fast synchronization and stabilization make all the plots overlap into a single line. 

\begin{figure}[ht]
    \centering
    \includegraphics[width=0.35\textwidth]{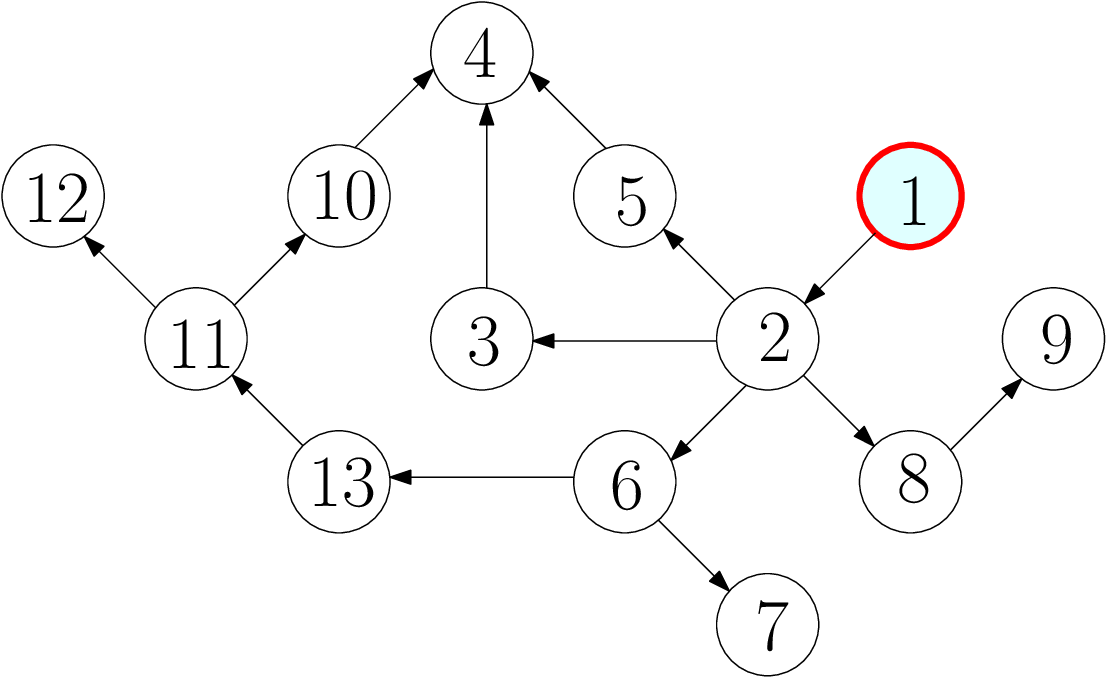}
    \caption{Distributed communication network, represented as a directed graph. The red circle indicates the leader agent. Each agent only has communication with the agents in its neighborhood according to the specified topology.}
    \label{fig:graph}
\end{figure}

\begin{figure}[ht]
    \centering
    \includegraphics[width=0.48\textwidth]{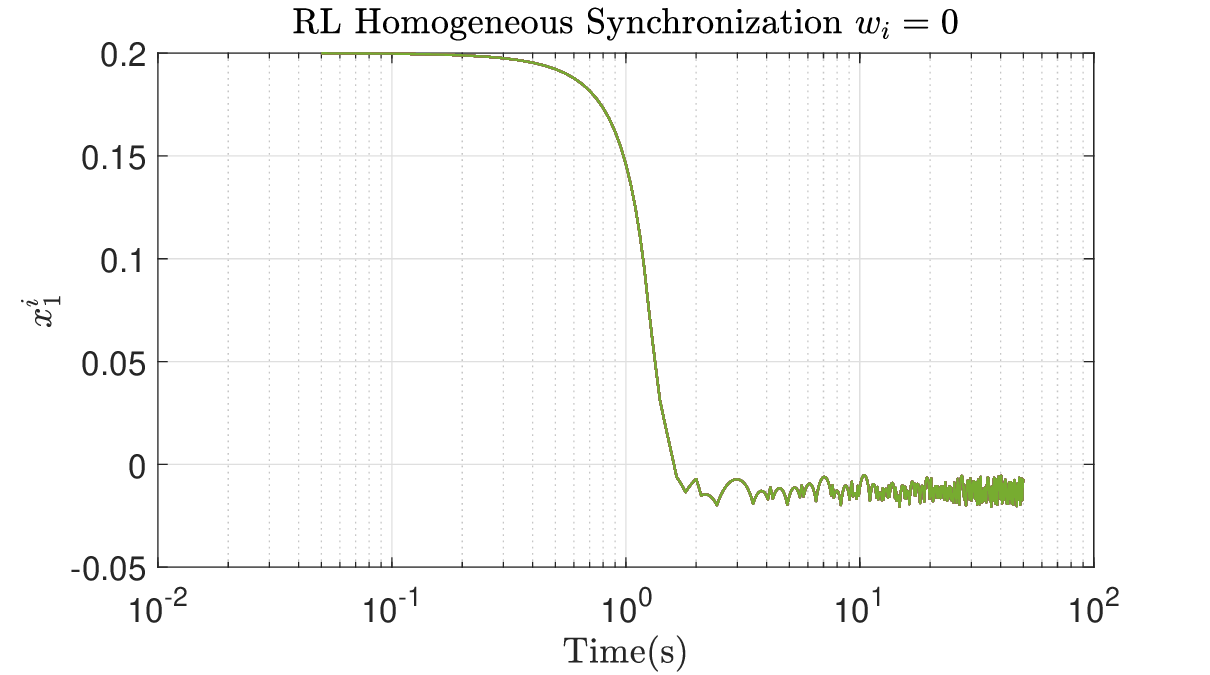}
    \caption{Synchronization of homogeneous agents with Reinforcement Learning technique. The algorithm policy was trained offline with respect to the reference.}
    \label{fig:system1}
\end{figure}

Figure~\ref{fig:system4} shows the same experiment as in Figure~\ref{fig:system1}, now including an input-matched uncertainty to the entire network of $w_i=0.1\sin (t)$ and the DMRAC-RL strategy. Specifically, this shows that when the reference model matches the model of the agents, the pre-trained RL policy alongside the DMRAC-RL stabilizes the nonlinear pendulums. Please note that the figures omit a legend due to space constraints, given the large number of lines representing the trajectories of all agents in the network.

\begin{figure}[ht]
    \centering
    \includegraphics[width=0.48\textwidth]{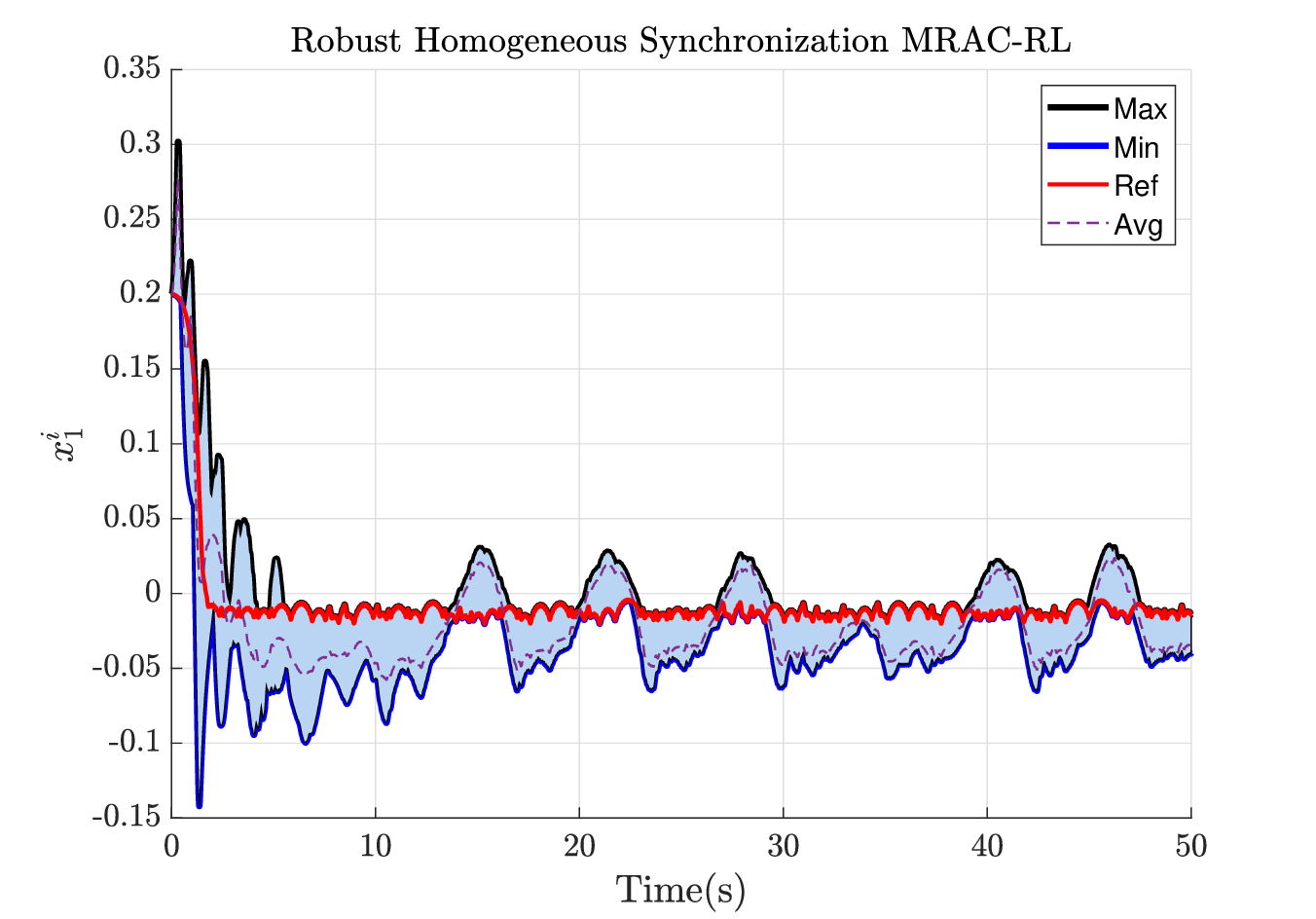}
    \caption{MRAC-RL homogeneous synchronization with input matched uncertainties. The worst agents' response delimits the shaded area, the average response is dotted, and the reference is red.}
    \label{fig:system4}
\end{figure}

Figure~\ref{fig:system5} shows the response of the nonlinear inverted pendulum network with variations of the model parameters $l$ and $m$ uniformly sampled from $\left[0.75,1.25\right]$. However, contrary to previous results, some nodes are unstable, and their states diverge.

\begin{figure}[ht]
    \centering
    \includegraphics[width=0.48\textwidth]{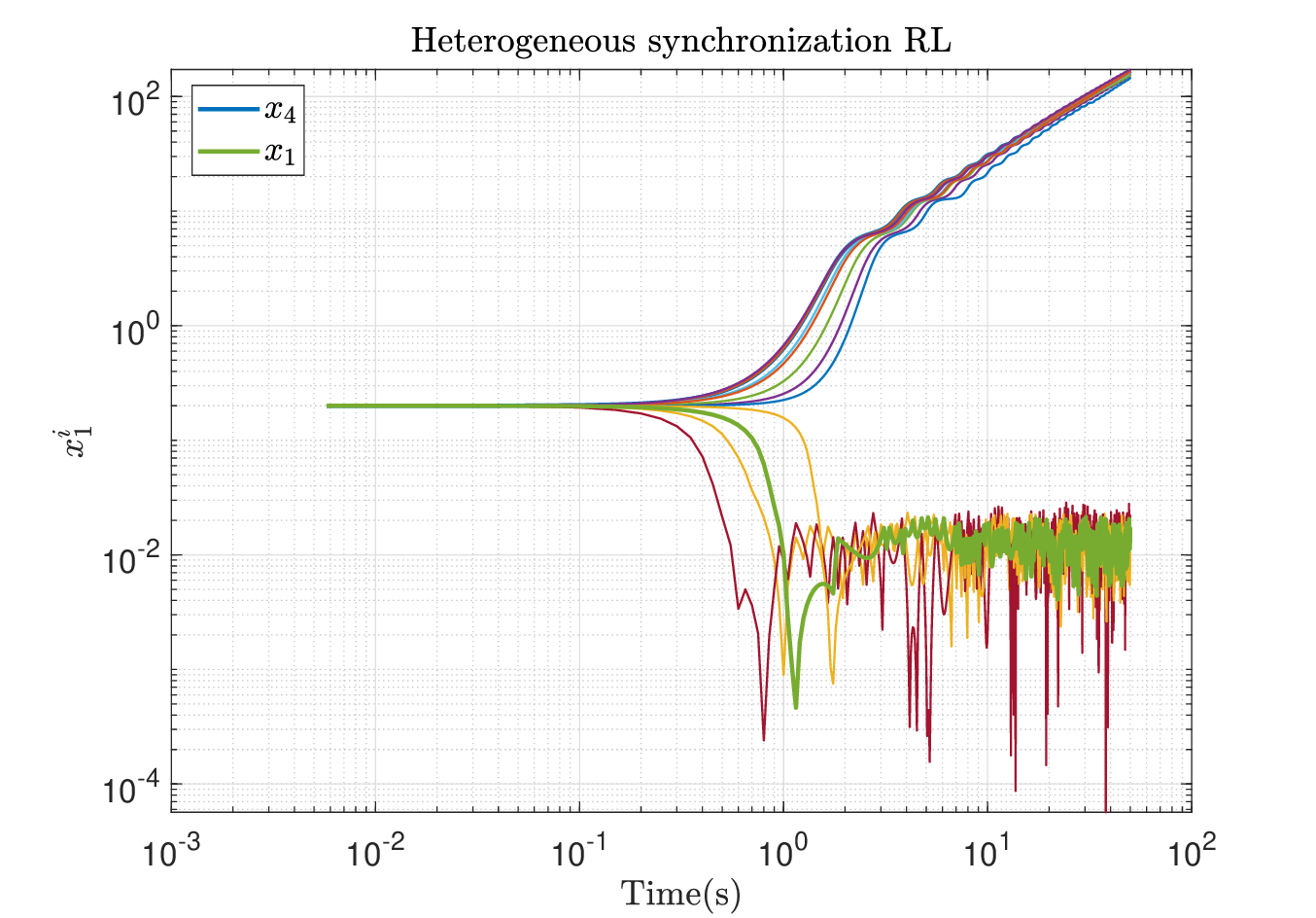}
    \caption{RL heterogeneous synchronization with input matched uncertainties. The response of the states of each of the agents in the network is shown. The graph of $x_1$ shows one of the agents whose dynamics converge and with $x_4$ an agent whose dynamics diverge given the alteration in the model parameters.} 
    \label{fig:system5}
\end{figure}

The DMRAC-RL control law is included to counteract these uncertainties. Figure~\ref{fig:system6} shows the synchronization response of the nonlinear distributed system with heterogeneous agents. The agent parameters are uniformly sampled from $\left[0.75,1.25\right]$ and the input matched uncertainty is $w_i=0.1\sin (t)$; the graph shows the worst results above and below for each agent, with the dotted line showing the average value of the agents at each timestep and the reference in red. Note that even under these adverse conditions, the system synchronizes. It is important to highlight that this is the main contribution of this work. With the variations in the agent's parameters concerning the reference model, the response of the policy trained on the reference model is not robust, as shown in Figure~\ref{fig:system5}, whereas the proposed DMRAC-RL strategy allows synchronization.

\begin{figure}[ht]
    \centering
    \includegraphics[width=0.48\textwidth]{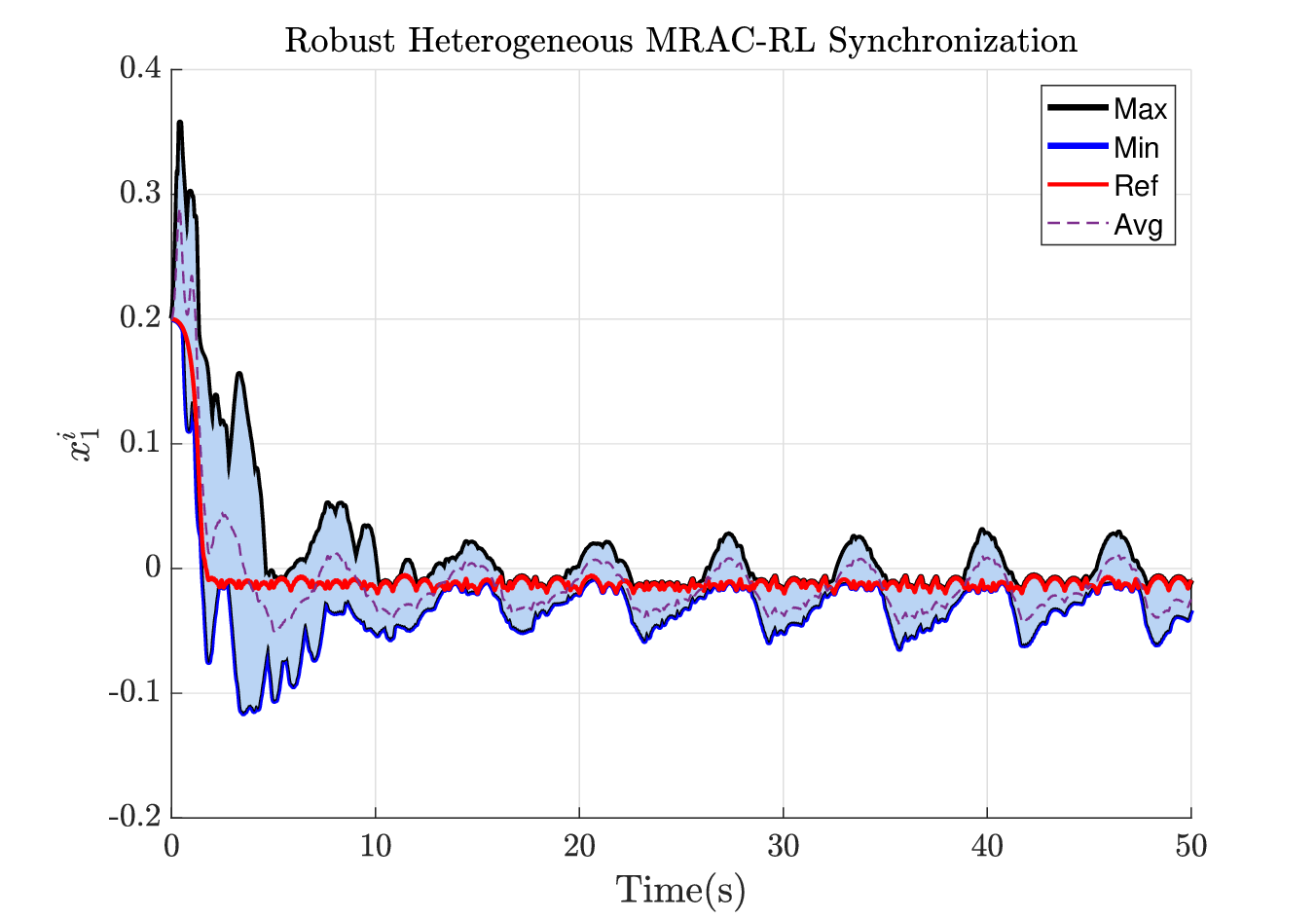}
    \caption{MRAC-RL heterogeneous synchronization with input matched uncertainties to validate the synchronization of the developed technique. The worst agents' response delimits the shaded area, the average response is dotted, and the reference is red.}
    \label{fig:system6}
\end{figure}

Next, we show the performance of the proposed DMRAC-RL framework on the network by including a random input-matched uncertainty with uniform distribution sampled along $\left[-1,1\right]$.  Figure~\ref{fig:systemRnd} shows the trajectories generated by the network of pendulums with these uncertainties. The network of heterogeneous nonlinear systems with random input-matched uncertainty synchronizes. 

Finally, in Figure~\ref{fig:systemTrack}, we show the performance of the proposed method on a tracking problem of a multi-step reference signal. Recall that only the leader agent can access the reference model and the policy trained through the RL algorithm. Still, the network tracks the reference signal with heterogeneous parameters, input matched uncertainties, and initial conditions variations.

\begin{figure}[ht]
    \centering
    \includegraphics[width=0.48\textwidth]{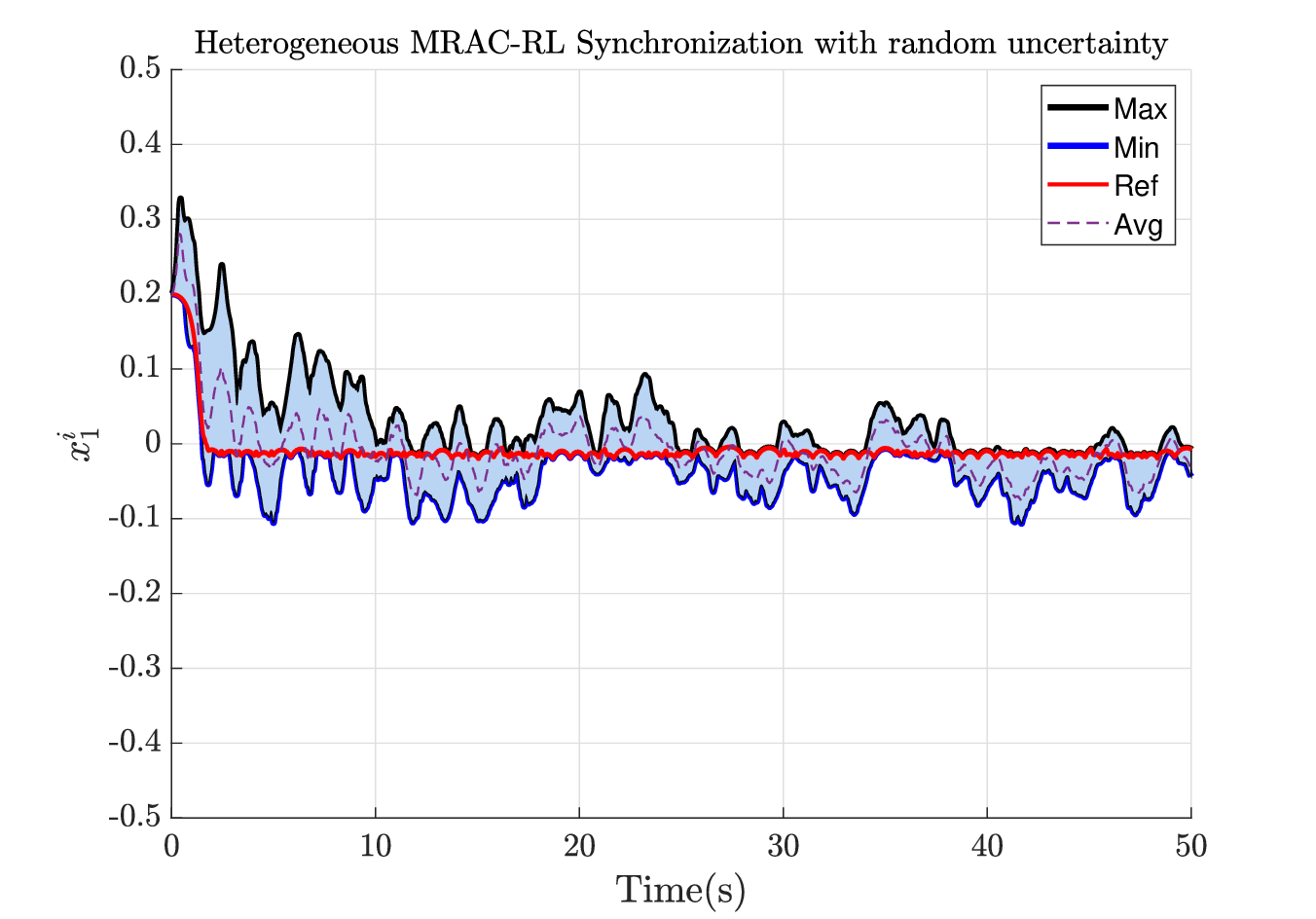}
    \caption{MRAC-RL heterogeneous synchronization with random input matched uncertainties. The worst agents' response delimits the shaded area, the average response is dotted, and the reference is red.}
    \label{fig:systemRnd}
    
\end{figure}

\begin{figure}[ht]
    \centering
    \includegraphics[width=0.48\textwidth]{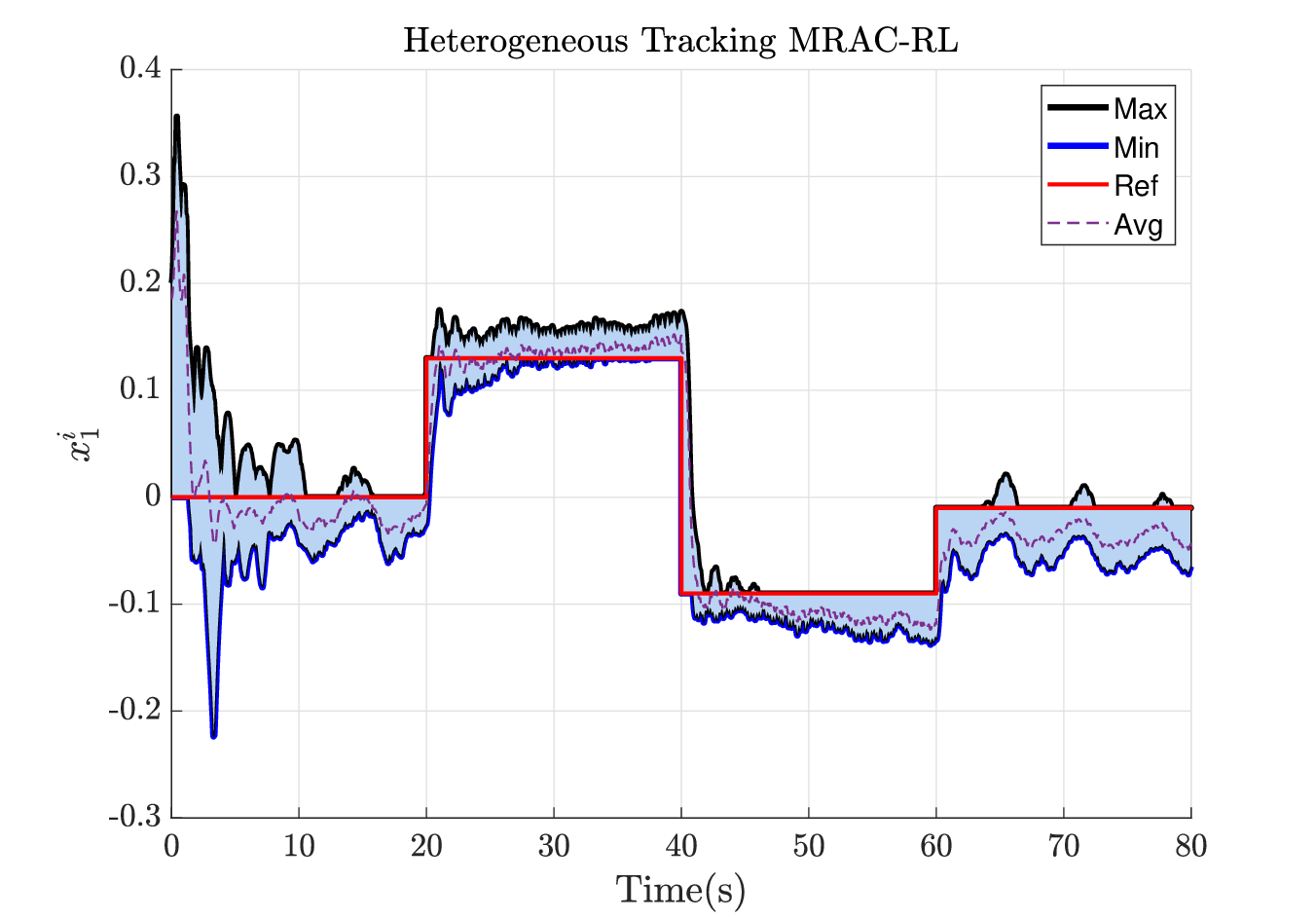}
    \caption{MRAC-RL heterogeneous tracking with input matched uncertainties. The worst agents' response delimits the shaded area, the average response is dotted, and the reference is red.}
    \label{fig:systemTrack}
\end{figure}

\subsection{Dynamic model for magnitude saturation validation}

Next, we show the performance of the proposed DMSAC-RL framework on a MIMO linear model in the form
\begin{eqnarray}
 \dot{x}_i = \left[ \begin{array}{c}
x_2 \\
x_3+w_2 \\
-x_1-2x_2-3x_3+u_1\end{array} \right]\
\label{mimosystem}
\end{eqnarray}
The goal is to maintain a non-zero set point for the three-state system. 

Figure~\ref{fig:DMSAC_RL} shows the trajectories generated by the network of MIMO systems with the input Magnitude Saturation Adaptive Control, validating that it is possible to perform a heterogeneous synchronization of multiple input systems with saturation management. Recall that only one agent communicates directly with the reference agent trained through the RL algorithm. Still, the network is synchronized with heterogeneous parameters and variations in its initial conditions and different system and network configurations.
\begin{figure}[ht]
    \centering
    \includegraphics[width=0.48\textwidth]{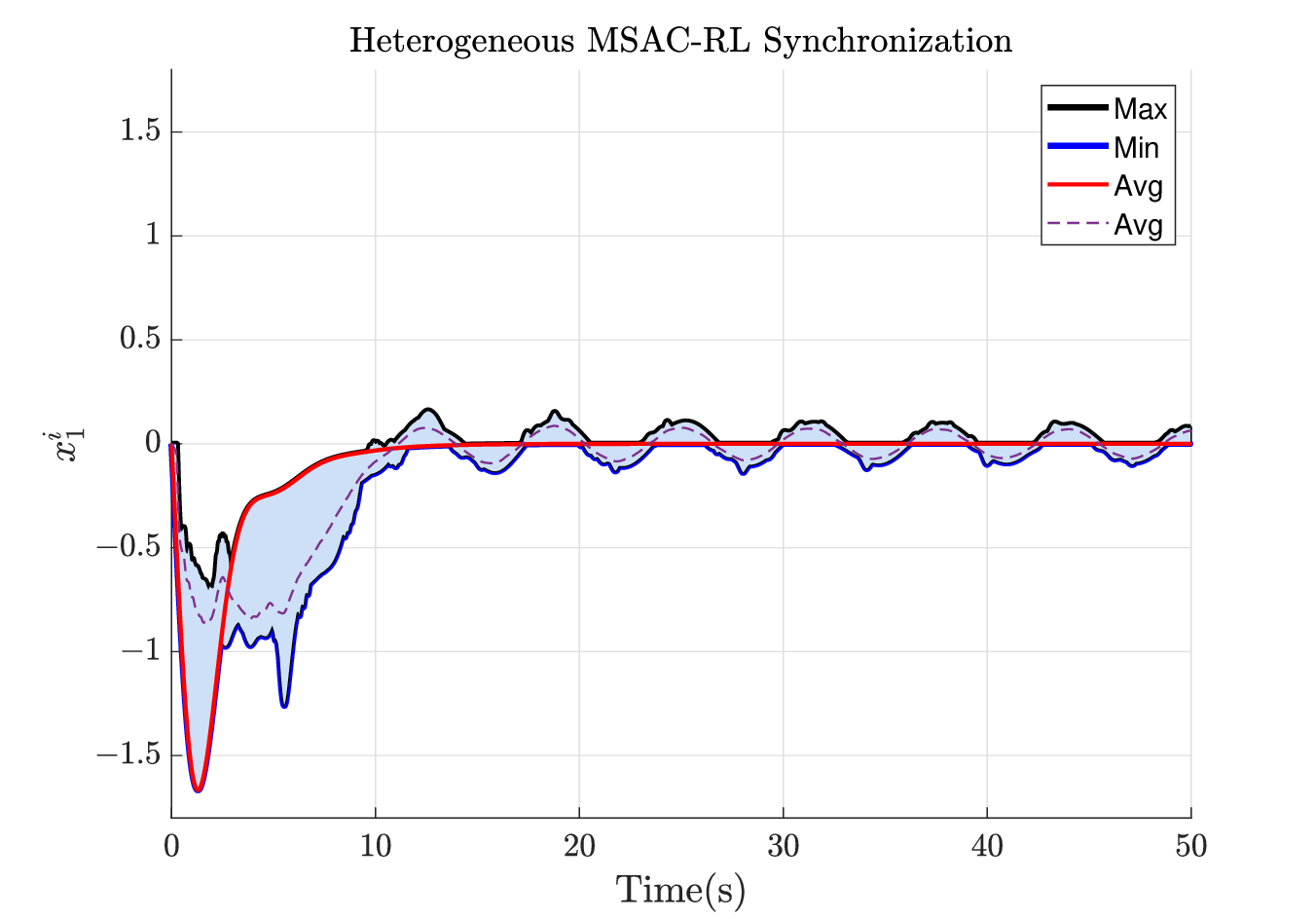}
    \caption{MIMO Multi-agent Synchronization with input magnitude saturation included. The worst agents' response delimits the shaded area, the average response is dotted, and the reference is red.}
    \label{fig:DMSAC_RL}
\end{figure}

Finally, to validate the saturation magnitude algorithm, Figure~\ref{fig:sin_sat} presents the response of an adaptive control without saturation management, including the saturation block in the simulation. The temporal response indicates a divergence in agent states across the network, demonstrating that without the controller, the network cannot be effectively managed.

\begin{figure}[ht]
    \centering
    \includegraphics[width=0.48\textwidth]{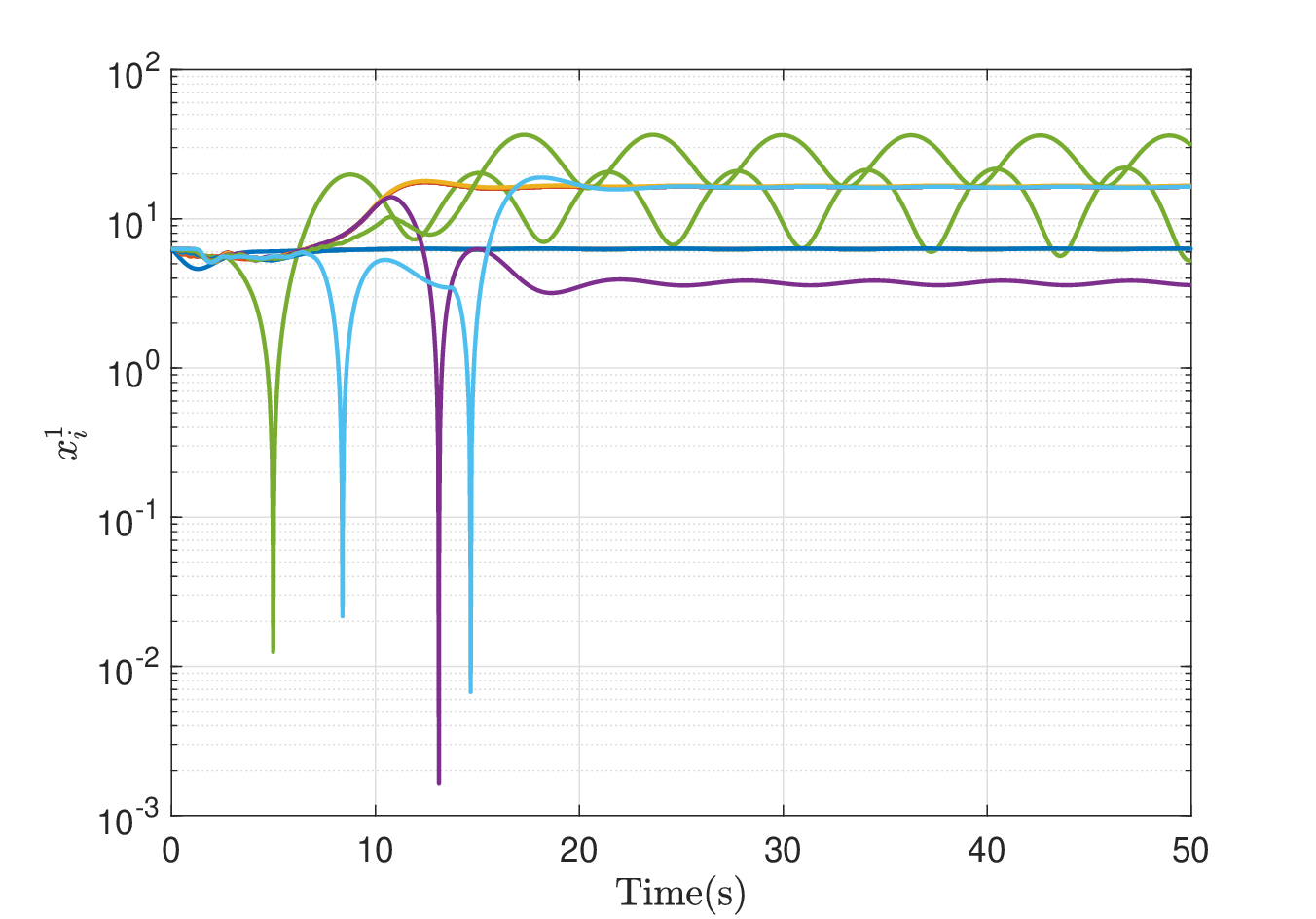}
    \caption{Temporal response of the Adaptive controller without saturation management algorithm included. The response of the states of each of the agents in the network is shown. Legend omitted for space.}
    \label{fig:sin_sat}
\end{figure}

Efficient saturation management is achieved through the controller, adaptive laws \eqref{u_th3}--\eqref{al_msac}, and the magnitude constraint~\eqref{msac_constraint}. Figure~\ref{fig:Magnitude} illustrates the response of the controller error's input magnitude, comparing adaptive techniques without saturation management and with the DMSAC-RL, both with and without saturation management. The blue line represents the error magnitude with DMSAC, while the red line shows the error magnitude without saturation management. In both cases, the saturation parameters were included. The figure demonstrates that without proper saturation management, the controller's response diverges (red) when the saturation block is included, thereby validating the effectiveness of the developed technique.

\begin{figure}[ht]
    \centering
    \includegraphics[width=0.48\textwidth]{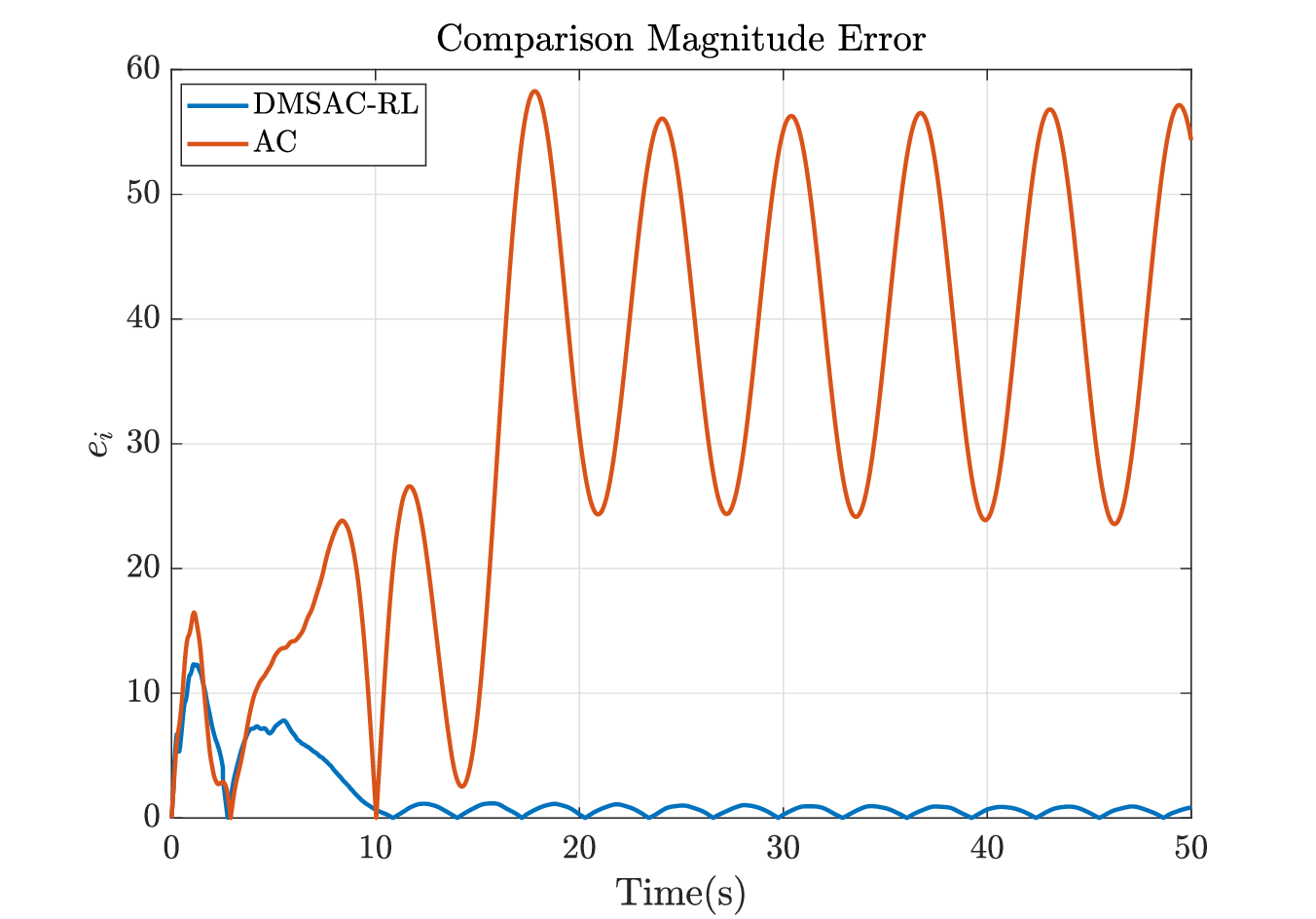}
    \caption{Error input magnitude comparison of heterogeneous synchronization techniques to validate a decrease in the control action without affecting the synchronization. The AC is displayed in red and the DMSAC-RL in blue}
    \label{fig:Magnitude}
\end{figure}

\section{Conclusions} \label{conclusion}
We proposed a distributed MRAC framework for robust and adaptive synchronization of leader-follower networks of heterogeneous nonlinear agents. We assume a pre-trained RL policy is available. This RL policy is trained on a reference model. However, the agents might have different model parameters and input-matched uncertainties. The proposed DMSAC-RL uses an inner loop that directly adjusts the policy for agents and complements an outer loop on augmented input to solve the distributed control problem. A stability analysis has been presented using Lyapunov's theory. We show stability for linear and nonlinear networks with input-matched uncertainties. The stability properties of the system are later extended to the cases of linear systems with input-matched uncertainties and nonlinear networks with no uncertainties. Numerical analysis shows the robustness of the proposed control law for twelve linear pendulums and nonlinear networks with different configurations of input-matched uncertainties in synchronization and tracking scenarios. The proposed method improves the stability properties of the pre-trained RL policy on the studied system. 
Future work will focus on accelerated tracking processes \cite{shi2020learning}, cyclic graphs \cite{Baldi2018cyclic}, time-varying graphs \cite{nedic2015nonasymptotic}, and practical implementations on physical experimental setups.

\bibliography{root}

\end{document}